\theoremstyle{thmstyleone}%
\newtheorem{theorem}{Theorem}
\newtheorem{corollary}{Corollary}%
\theoremstyle{thmstyletwo}%
\theoremstyle{thmstylethree}%
\newtheorem{definition}{Definition}%
\DeclareMathOperator*{\polylog}{poly\,\hspace{-2pt}log}
\DeclareMathOperator*{\argmin}{arg\,\hspace{-2pt}min}
\DeclareMathOperator*{\rem}{rem\,\hspace{-2pt}}
\begin{document}

\title[Embracing Off-the-Grid Samples]{Embracing Off-the-Grid Samples}


\author[1]{\fnm{Oscar L\'{o}pez}}\email{lopezo@fau.edu}

\author[2]{\fnm{\"Ozg\"ur Y\i lmaz}}\email{oyilmaz@math.ubc.ca}

\affil[1]{\orgdiv{Harbor Branch Oceanographic Institute}, \orgname{Florida Atlantic University}, \orgaddress{\street{5600 US 1 North}, \city{Fort Pierce}, \postcode{34946}, \state{Florida}, \country{U.S.A.}}}

\affil[2]{\orgdiv{Department of Mathematics}, \orgname{University of British Columbia}, \orgaddress{\street{1984 Mathematics Road}, \city{Vancouver}, \postcode{V6T 1Z2}, \state{British Columbia}, \country{Canada}}}


\abstract{Many empirical studies suggest that samples of continuous-time signals taken at locations randomly deviated from an equispaced grid (i.e., \emph{off-the-grid}) can benefit signal acquisition, e.g., undersampling and anti-aliasing. However, explicit statements of such advantages and their respective conditions are scarce in the literature. This paper provides some insight on this topic when the sampling positions are known, with grid deviations generated i.i.d. from a variety distributions. 

By solving a square-root LASSO decoder with an interpolation kernel we demonstrate the capabilities of nonuniform samples for compressive sampling, an effective paradigm for undersampling and anti-aliasing. For functions in the Wiener algebra that admit a discrete $s$-sparse representation in some transform domain, we show that $\mathcal{O}(s\polylog N)$ random off-the-grid samples are sufficient to recover an accurate $\frac{N}{2}$-bandlimited approximation of the signal. For sparse signals (i.e.,
$s \ll N$), this sampling complexity is a great reduction in comparison to
equispaced sampling where $\mathcal{O}(N)$ measurements are needed for the
same quality of reconstruction (Nyquist-Shannon sampling theorem).

We further consider noise attenuation via oversampling (relative to a desired bandwidth), a standard technique with limited theoretical understanding when the sampling positions are non-equispaced. By solving a least squares problem, we show that $\mathcal{O}(N\log N)$ i.i.d. randomly deviated samples provide an accurate $\frac{N}{2}$-bandlimited approximation of the signal with suppression of the noise energy by a factor $\sim\frac{1}{\sqrt{\log(N)}}$.}

\keywords{Nonuniform sampling, sub-Nyquist sampling, anti-aliasing, jitter sampling, compressive sensing, Dirichlet kernel}



\maketitle

\section{Introduction}
\label{intro}
\indent The Nyquist-Shannon sampling theorem is perhaps the most impactful result in the theory of signal processing, fundamentally shaping the practice of acquiring and processing data \cite{shannon,nyquist} (also attributed to Kotel'nikov \cite{kotel}, Ferrar \cite{ferrar}, Cauchy \cite{samplingsurvey}, Ogura \cite{Ogura}, E.T. and J.M. Whittaker \cite{whittaker1,whittaker2}). In this setting, typical acquisition of a continuous-time signal involves taking equispaced samples at a rate slightly higher than a prescribed frequency $\omega$ Hz in order to obtain a bandlimited approximation via a quickly decaying kernel. Such techniques provide accurate approximations of (noisy) signals whose spectral energy is largely contained in the band $[-\omega/2,\omega/2]$ \cite{samplingsurvey,samplingbook,oversampling,nonuniform}.

As a consequence, industrial signal acquisition and post-processing methods tend to be designed to incorporate uniform sampling. However, such sampling schemes are difficult to honor in practice due to physical constraints and natural factors that perturb sampling locations from the uniform grid, i.e., \emph{nonuniform} or \emph{off-the-grid} samples. In response, nonuniform analogs of the noise-free sampling theorem have been developed, where an average sampling density proportional to the highest frequency $\omega$ of the signal guarantees accurate interpolation, e.g., Landau density \cite{Landau,nonuniform,landaudensity}. However, non-equispaced samples are typically unwanted and regarded as a burden due to the extra computational cost involved in regularization, i.e., interpolating the nonuniform samples onto the desired equispaced grid.

On the other hand, many works in the literature have considered the potential benefits of deliberate nonuniform sampling \cite{best0,best,best2,non1,non2,non3,non4,non5,non6,non7,non8,non9,non10,non11,non12,non13,non14,non18,non19,non20,non21,non22,non23,non24,non25,non26,non27,non28}. Suppression of aliasing error, i.e., \emph{anti-aliasing}, is a well known advantage of randomly perturbed samples. For example, jittered sampling is a common technique for anti-aliasing that also provides a well distributed set of samples \cite{non15,non16,non17,non14}. To the best of the authors' knowledge, this phenomenon was first noticed by Harold S. Shapiro and Richard A. Silverman \cite{best0} (also by Federick J. Beutler \cite{best,best2} and implicitly by Henry J. Landau \cite{Landau}) and remained unused in applications until rediscovered in Pixar animation studios by Robert L. Cook \cite{non1}. According to our literature review, such observations remain largely empirical or arguably uninformative for applications. Closing this gap between theory and experiments would help the practical design of such widely used methodologies.

To this end, in this paper we propose a practical framework that allows us to concretely investigate the properties of randomly deviated samples for undersampling, anti-aliasing and general noise attenuation. To elaborate (see Section \ref{notation} for notation), let $\textbf{f} :[-\frac{1}{2}, \frac{1}{2})\mapsto \mathbb{C}$ be our function of interest that belongs to some smoothness class. Our goal is to obtain a uniform discretization $f\in\mathbb{C}^N$, where an estimate of $f_k = \textbf{f}(\frac{k-1}{N}-\frac{1}{2})$ will provide an accurate approximation $\textbf{f}^{\sharp}(x)$ of $\textbf{f}(x)$ for all $x\in[-\frac{1}{2}, \frac{1}{2})$. We are given noisy non-equispaced samples, $b = \tilde{f} + d \in\mathbb{C}^m$, where $\tilde{f}_k = \textbf{f}(\frac{k-1}{m}-\frac{1}{2} + \Delta_k)$ is the nonuniformly sampled signal and $d\in\mathbb{C}^m$ encompasses unwanted additive noise. In general, we will consider functions $\textbf{f}$ with support on $[-\frac{1}{2}, \frac{1}{2})$ whose periodic extension is in the Wiener algebra $A(\Omega)$ \cite{Wiener}, where by abuse of notation $\Omega$ denotes the interval $[-\frac{1}{2}, \frac{1}{2})$ and the torus $\mathbb{T}$.

To achieve undersampling and anti-aliasing, we assume our uniform signal admits a sparse (or compressible) representation along the lines of compressive sensing \cite{CSbook,introCS,CSbook2}. We say that $f$ is \emph{compressible} with respect to a transform, $\Psi\in\mathbb{C}^{N\times N}$, if there exists some $g\in\mathbb{C}^N$ such that $f = \Psi g$ and $g$ can be accurately approximated by an $s$-sparse vector ($s\leq N$). In this scenario, our methodology consists of constructing an interpolation kernel $\mathcal{S}\in\mathbb{R}^{m\times N}$ that achieves $\mathcal{S}f\approx\tilde{f}$ accurately for smooth signals, in order to define our estimate $\textbf{f}^{\sharp}(x)$ using the discrete approximation $\Psi g^{\sharp}\approx f$ where
\begin{equation}
\label{l1minS}
g^{\sharp}\coloneqq\argmin_{h\in\mathbb{C}^N}\lambda\|h\|_1+\frac{\sqrt{N}}{\sqrt{m}}\|\mathcal{S}\Psi h - b\|_2
\end{equation}
and $\lambda\geq 0$ is a parameter to be chosen appropriately. We show that for signals in the Wiener algebra and under certain distributions $\mathcal{D}$, if we have $m\sim \mathcal{O}(s\polylog(N))$ off-the-grid samples with i.i.d. deviations $\{\Delta_k\}_{k=1}^{m}\sim\mathcal{D}$ then the approximation error $\lvert \textbf{f}^{\sharp}(x)-\textbf{f}(x)\rvert$ is proportional to $\|d\|_2$, the error of the best $s$-sparse approximation of $g$, and the error of the best $\frac{N}{2}$-bandlimited approximation of $\textbf{f}$ in the Wiener algebra norm (see equation 6.1 in \cite{Wiener}). If $s\ll N$, the average sampling rate required for our result (step size $\sim\frac{1}{s\polylog(N)}$) provides a stark contrast to standard density conditions where a rate proportional to the highest frequency $\omega\sim N$, resulting in step size $\sim \frac{1}{N}$, is needed for the same bandlimited approximation. The result is among the first to formally state the anti-aliasing nature of nonuniform sampling in the context of undersampling (see Section \ref{MainResult}). 

Removing the sparse signal model, we attenuate measurement noise (i.e., \emph{denoise}) by defining $\textbf{f}^{\sharp}(x)$ using the discrete estimate
\begin{equation}
\label{LS}
f \approx f^{\sharp} := \argmin_{h\in\mathbb{C}^N}\|\mathcal{S}h - b\|_2.
\end{equation}
In this context, our main result states that $m\geq N\log(N)$ i.i.d. randomly deviated samples provide approximation error $\lvert \textbf{f}^{\sharp}(x)-\textbf{f}(x)\rvert$ proportional to the noise level ($\frac{\|d\|_2}{\sqrt{\log(N)}}$) and the error of the best $\frac{N}{2}$-bandlimited approximation of $\textbf{f}$ in the Wiener algebra norm. Thus, by nonuniform oversampling (relative to the desired $\frac{N}{2}$-bandwidth) we attenuate unwanted noise regardless of its structure. While uniform oversampling is a common noise filtration technique, our results show that general nonuniform samples also posses this denoising property (see Section \ref{1D}).

The rest of the paper is organized as follows: Section \ref{Methodology} provides a detailed elaboration of our sampling scenario, signal model and methodology. Section \ref{MainResult} showcases our results for anti-aliasing and undersampling of compressible signals while Section \ref{1D} considers noise attenuation via oversampling. A comprehensive discussion of the results and implications is presented in Section \ref{discussion}. Several experiments and computational considerations are found in Section \ref{experiments}, followed by concluding remarks in Section \ref{conclusion}. We postpone the proofs of our statements until Section \ref{mainproof}. Before proceeding to the next section, we find it best to introduce the general notation that will be used throughout. However, each subsequent section may introduce additional notation helpful in its specific context.

\subsection{Notation} 
\label{notation}
We denote complex valued functions of real variables using bold letters, e.g., $\textbf{f}:\mathbb{R}\to\mathbb{C}$. For any integer $n\in\mathbb{N}$, $[n]$ denotes the set $\{\ell\in\mathbb{N}: 1 \leq \ell\leq n\}$. For $k,\ell\in\mathbb{N}$, $b_{k}$ indicates the $k$-th entry of the vector $b$, $D_{k\ell}$ denotes the $(k,\ell)$ entry of the matrix $D$ and $D_{k*} \ (D_{*\ell})$ denotes the $k$-th row (resp. $\ell$-th column) of the matrix. We reserve $x$ to denote real variables and write the complex exponential as $\textbf{e}(x) := e^{2\pi ix}$, where $i$ is the imaginary unit. For a vector $f\in\mathbb{C}^{n}$ and $1\leq p < \infty$, $\|f\|_p := \big[\sum_{k=1}^{n}\lvert f_k\rvert^p\big]^{1/p}$ is the $p$-norm, $\|f\|_{\infty} = \max_{k\in [n]}\lvert f_k\rvert$, and $\|f\|_0$ gives the total number of non-zero elements of $f$. For a matrix $X\in\mathbb{C}^{n\times m}$, $\sigma_k(X)$ denotes the $k$-th largest singular value of $X$ and $\|X\| := \sigma_1(X)$ is the spectral norm. $A(\Omega)$ is the Wiener algebra and $H^k(\Omega)$ is the Sobolev space $W^{k,2}(\Omega)$ (with domain $\Omega$), $S^{n-1}$ is the unit sphere in $\mathbb{C}^n$, and the adjoint of a linear operator $\mathcal{A}$ is denoted by $\mathcal{A}^*$. 

\section{Assumptions and Methodology}
\label{Methodology}

In this section we develop the signal model, deviation model and interpolation kernel, in Sections \ref{SignalModel}, \ref{DevModel} and \ref{dirichletsection} respectively. This will allow us to proceed to Sections \ref{MainResult} and \ref{1D} where the main results are elaborated. We note that the deviation model (Section \ref{DevModel}) and sparse signal model at the end of Section \ref{SignalModel} only apply to the compressive sensing results in Section \ref{MainResult}. However, the \emph{sampling on the torus} assumption in Section \ref{SignalModel} does apply to the results in Section \ref{1D} as well.

\subsection{Signal Model}
\label{SignalModel}
For the results in Section \ref{MainResult} and \ref{1D}, let $\Omega = [-\frac{1}{2},\frac{1}{2})$ and let $\textbf{f}:\Omega\to\mathbb{C}$ be the function of interest to be sampled. We assume $\textbf{f}\in A(\Omega)$ with Fourier expansion
\begin{equation}
\label{Fexpansion}
\textbf{f}(x) = \sum_{\ell=-\infty}^{\infty}c_{\ell}\textbf{e}(\ell x),
\end{equation}
on $\Omega$. Note that our regularity assumption implies that 
\[
\sum_{\ell=-\infty}^{\infty}\lvert c_{\ell}\rvert<\infty,
\]
which will be crucial for our error bounds. Further, $H^{k}(\Omega)\subset A(\Omega)$ for $k\geq 1$ so that our context applies to many signals of interest.

Henceforth, let $N\in\mathbb{N}$ be odd. We denote the discretized regular data vector by $f \in \mathbb{C}^{N}$, which is obtained by sampling $\textbf{f}$ on the uniform grid $\tau=\{t_1, \cdots, t_N\} \subset \Omega$, with $t_k := \frac{k-1}{N}-\frac{1}{2}$, (which is a collection of equispaced points) so that $f_k = \textbf{f}(t_{k})$. The vector $f$ will be our discrete signal of interest to recover via nonuniform samples in order to ultimately obtain an approximation to $\textbf{f}(x)$ for all $x\in\Omega$. Similar results can be obtained in the case $N$ is even, our current assumption is adopted to simplify the exposition.

The observed nonuniform data is encapsulated in the vector $\tilde{f} \in \mathbb{C}^{m}$ with underlying unstructured grid $\tilde{\tau} = \{\tilde{t}_1, \cdots, \tilde{t}_m\} \subset \Omega$ where $\tilde{t}_k := \frac{k-1}{m}-\frac{1}{2} + \Delta_k$ is now a collection of generally non-equispaced points. The entries of the perturbation vector $\Delta\in\mathbb{R}^{m}$ define the pointwise deviations of $\tilde{\tau}$ from the equispaced grid $\{\frac{k-1}{m}-\frac{1}{2}\}_{k=1}^{m}$, where $\tilde{f}_k = \textbf{f}(\tilde{t}_{k})$. Noisy nonuniform samples are given as
\[
b = \tilde{f} + d \in \mathbb{C}^m,
\]
where the noise model, $d$, does not incorporate off-the-grid corruption. We assume that we know $\tilde{\tau}$.

In order for the expansion (\ref{Fexpansion}) to remain valid for $x\in\tilde{\tau}$, we must impose $\tilde{\tau}\subset \Omega$. This is not possible for the general deviations $\Delta$ we wish to examine, so we instead adopt the torus as our sampling domain to ensure this condition.

\emph{Sampling on the torus:} for all our results, we consider sampling schemes to be on the torus. In other words, we allow grid points $\tilde{\tau}$ to lie outside of the interval $[-\frac{1}{2},\frac{1}{2})$ but they will correspond to samples of $\textbf{f}$ within $[-\frac{1}{2},\frac{1}{2})$ via a circular wrap-around. To elaborate, if $\textbf{f}\lvert_{\Omega}(x)$ is given as
\[
\textbf{f}\lvert_{\Omega}(x) =
  \begin{cases}
                                   \textbf{f}(x) & \text{if} \ \ x\in[-\frac{1}{2},\frac{1}{2}) \\
                                   0 & \text{if} \ \ x\notin[-\frac{1}{2},\frac{1}{2}),
  \end{cases}
\]
then we define $\tilde{\textbf{f}}(x)$ as the periodic extension of $\textbf{f}\lvert_{\Omega}(x)$ to the whole line
\[
\tilde{\textbf{f}}(x) = \sum_{\ell=-\infty}^{\infty}\textbf{f}\lvert_{\Omega}(x+\ell).
\]
We now apply samples generated from our deviations $\tilde{\tau}$ to $\tilde{\textbf{f}}(x)$. Indeed, for any $\tilde{t}_k$ generated outside of $\Omega$ will have $\tilde{\textbf{f}}(\tilde{t}_k) = \textbf{f}(t^*)$ for some $t^*\in\Omega$. In this way, we avoid restricting the magnitude of the entries of $\Delta$ and the expansion (\ref{Fexpansion}) will remain valid for any nonuniform samples generated.

\emph{Sparse signal model:} For the results of Section \ref{MainResult} only, we impose a compressibility condition on $f \in \mathbb{C}^{N}$. To this end, let $\Psi\in\mathbb{C}^{N\times N}$ be a basis with $0< \sigma_N(\Psi) =: \alpha$ and $\sigma_1(\Psi) =: \beta$. We assume there exists some $g\in\mathbb{C}^N$ such that $f = \Psi g$, where $g$ can be accurately approximated by an $s\leq N$ sparse vector. To be precise, for $s\in [N]$ we define the error of best $s$-sparse approximation of $g$ as
\[
\epsilon_s(g) := \min_{\|h\|_0\leq s}\|h-g\|_1,
\]
and assume $s$ has been chosen so that $\epsilon_s(g)$ is within a prescribed error tolerance determined by the practitioner. 

In Section \ref{mainresultproof}, we will relax the condition that $\Psi$ be a basis by allowing full column rank matrices $\Psi\in\mathbb{C}^{N\times n}$ with $n\leq N$. While such transforms are not typical in compressive sensing, we argue that they may be of practical interest since our results will show that if $\Psi$ can be selected as tall matrix then the sampling complexity will solely depend on its number of columns (i.e., the smallest dimension $n$). 

The transform $\Psi$ will have to be coherent with respect to the 1D centered discrete Fourier basis $\mathcal{F}\in\mathbb{C}^{N\times N}$ (see Section \ref{dirichletsection} for definition of $\mathcal{F}$). We define the DFT-incoherence parameter as
\[
\gamma = \max_{\ell\in [N]} \sum_{k=1}^{N}\lvert\langle \mathcal{F}_{*k},\Psi_{*\ell}\rangle\rvert,
\]
which provides a uniform bound on the $\ell_1$-norm of the discrete Fourier coefficients of the columns of $\Psi$. This parameter will play a role in the sampling complexity of our result in Section \ref{MainResult}, as a metric that quantifies the smoothness of our signal of interest. We discuss $\gamma$ in detail in Section \ref{SignalModelDiscussion}, including examples for several transforms common in compressive sensing.

\subsection{Deviation Model}
\label{DevModel}
Section \ref{MainResult} will apply to random deviations $\Delta\in\mathbb{R}^{m}$ whose entries are i.i.d. with any distribution, $\mathcal{D}$, that obeys the following: for $\delta\sim\mathcal{D}$, there exists some $\theta \geq 0$ such that for all integers $0<\lvert j\rvert\leq\frac{N-1}{m}$ we have
\begin{equation}
\label{thetadef}
\frac{2N}{m}\lvert\mathbb{E}\textbf{e}(jm\delta)\rvert \leq \theta .    
\end{equation}
This will be known as our \emph{deviation model}. In our results, distributions with smaller $\theta$ parameter will require less samples and provide reduced error bounds. We postpone further discussion of the deviation model until Section \ref{DeviationDiscussion}, where we will also provide examples of deviations that fit our criteria. We note that the deviation model is most relevant when $m<N$. The case $m\geq N$ is discussed in Section \ref{1D}, which no longer requires this deviation model or the sparse signal model.

\subsection{Dirichlet Kernel}
\label{dirichletsection}

In Section \ref{MainResult} and \ref{1D}, we model our nonuniform samples via an interpolation kernel $\mathcal{S}\in\mathbb{R}^{m\times N}$ that achieves $\mathcal{S}f\approx \tilde{f}$ accurately. We consider the Dirichlet kernel defined by $\mathcal{S} = \mathcal{N}\mathcal{F}^*:\mathbb{C}^{N}\to\mathbb{C}^{m}$, where $\mathcal{F}\in\mathbb{C}^{N\times N}$ is a 1D centered discrete Fourier transform (DFT) and $\mathcal{N}\in\mathbb{C}^{m\times N}$ is a 1D centered nonuniform discrete Fourier transform (NDFT, see \cite{NFFT,NFFT2}) with normalized rows and non-harmonic frequencies chosen according to $\tilde{\tau}$. In other words, let $\tilde{N} = \frac{N-1}{2}$, then the $(k, \ell) \in [m]\times [N]$ entry of $\mathcal{N}$ is given as
\[
\mathcal{N}_{k\ell} = \frac{1}{\sqrt{N}}\textbf{e}(-\tilde{t}_k(\ell-\tilde{N}-1)).
\]
This NDFT is referred to as a nonuniform discrete Fourier transform of type 2 in \cite{NFFT2}. Thus, the action of $\mathcal{S}$ on $f\in\mathbb{C}^{N}$ can be given as follows: we first apply the centered inverse DFT to our discrete uniform data
\begin{equation}
\label{fft}
\check{f}_u := (\mathcal{F}^*f)_u = \sum_{p=1}^{N}f_p\mathcal{F}^*_{up} := \frac{1}{\sqrt{N}}\sum_{p=1}^{N}f_p\textbf{e}(t_p(u-\tilde{N}-1)), \ \ \forall u\in[N],
\end{equation}
followed by the NDFT in terms of $\tilde{\tau}$:
\begin{equation}
\label{nfft}
(\mathcal{S}f)_k := (\mathcal{N}\check{f})_k = \sum_{u=1}^{N}\check{f}_u\mathcal{N}_{ku} := \frac{1}{\sqrt{N}}\sum_{u=1}^{N} \check{f}_u\textbf{e}(-\tilde{t}_k(u-\tilde{N}-1)), \ \ \forall k\in[m].
\end{equation}
Equivalently,
\begin{align}
\label{dirichlet}
&(\mathcal{S}f)_k =\frac{1}{N}\sum_{p=1}^{N}f_p\textbf{K}(\tilde{t}_{k}-t_p) \ \ \ \ \forall k\in[m],
\end{align}
where $\textbf{K}(x) = \frac{\sin{(N\pi x})}{\sin{(\pi x)}}$ is the Dirichlet kernel. This equality is
well known and holds by applying the geometric series formula upon expansion. This kernel is commonly used for trigonometric interpolation and is accurate when acting on signals that can be well approximated by trigonometric polynomials of finite order, as we show in the following theorem.

\begin{theorem}
\label{thmS}
Let $\mathcal{S}, f$ and $\tilde{f}$ be defined as above and $\tilde{t}_k\in\Omega$ for some $k\in [m]$. If $\tilde{t}_k = t_p$ for some $p\in [N]$ then
\begin{equation}
\label{perfectinterpolation}
\left(\tilde{f}-\mathcal{S}f\right)_k = 0
\end{equation}
and otherwise
\begin{equation}
\label{interpolationerror}
\left(\tilde{f}-\mathcal{S}f\right)_k = \sum_{\lvert\ell\rvert>\tilde{N}}c_{\ell}\left(\emph{\textbf{e}}(\ell\tilde{t}_k) - (-1)^{\lfloor\frac{\ell+\tilde{N}}{N}\rfloor}\emph{\textbf{e}}(r(\ell)\tilde{t}_k)\right),
\end{equation}
where $r(\ell) = \rem(\ell + \tilde{N},N) - \tilde{N}$ with $\rem(p,q)$ giving the remainder after division of $p$ by $q$.
As a consequence, if $\tilde{t}_k\in\Omega$ for all $k\in [m]$ then for any integer $1\leq p< \infty$
\begin{align}
\label{errorS}
&\|\tilde{f}-\mathcal{S}f\|_p \leq 2m^{\frac{1}{p}}\sum_{\lvert\ell\rvert>\tilde{N}}\lvert c_{\ell}\rvert,
\end{align}
and
\begin{align}
&\|\tilde{f}-\mathcal{S}f\|_{\infty} \leq 2\sum_{\lvert\ell\rvert>\tilde{N}}\lvert c_{\ell}\rvert.
\end{align}
\end{theorem}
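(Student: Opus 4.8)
The plan is to expand everything in the Fourier series (\ref{Fexpansion}) and let the structure of the centered grids collapse the sums. First I would substitute $f_p=\sum_{\ell}c_\ell\textbf{e}(\ell t_p)$ into the two-step description (\ref{fft})--(\ref{nfft}) of $\mathcal{S}f$; this is legitimate because $\sum_\ell|c_\ell|<\infty$ by the Wiener-algebra hypothesis, so every interchange of the finite $p$- and $u$-sums with the infinite $\ell$-sum is absolutely convergent. The inner sum $\sum_{p=1}^N\textbf{e}(t_p q)$ over the regular grid, with $q=\ell+u-\tilde{N}-1$, is geometric and equals $N(-1)^q$ when $N\mid q$ and $0$ otherwise; the factor $(-1)^q$ is exactly the contribution of the half-integer shift $-\tfrac12$ in $t_p$. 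Hence $\check f_u=\sqrt{N}\sum c_\ell(-1)^{\ell+u-\tilde{N}-1}$, the sum running over those $\ell$ with $\ell+u-\tilde{N}-1\equiv 0\pmod N$.

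Next I would feed this into (\ref{nfft}) and swap the $u$- and $\ell$-sums: for each $\ell$ there is a unique $u\in[N]$ with $N\mid(\ell+u-\tilde{N}-1)$, and for that $u$ the centered index $u-\tilde{N}-1$ equals $-r(\ell)$, since both $u-\tilde{N}-1$ and $-r(\ell)$ lie in the complete residue system $\{-\tilde{N},\dots,\tilde{N}\}$ and are congruent to $-\ell$ modulo $N$ (using that $r(\ell)=\rem(\ell+\tilde{N},N)-\tilde{N}$ is the representative of $\ell\bmod N$ in that range and $2\tilde{N}=N-1$). This yields $(\mathcal{S}f)_k=\sum_\ell c_\ell(-1)^{\ell-r(\ell)}\textbf{e}(r(\ell)\tilde{t}_k)$, and because $N$ is odd while $\ell-r(\ell)=N\lfloor(\ell+\tilde{N})/N\rfloor$, the sign collapses to $(-1)^{\lfloor(\ell+\tilde{N})/N\rfloor}$. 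Subtracting this from $\tilde{f}_k=\sum_\ell c_\ell\textbf{e}(\ell\tilde{t}_k)$ and observing that every term with $|\ell|\le\tilde{N}$ vanishes (there $r(\ell)=\ell$ and the floor is $0$) produces exactly (\ref{interpolationerror}). The exact-interpolation identity (\ref{perfectinterpolation}) then follows either as a special case of (\ref{interpolationerror}) — when $\tilde{t}_k=t_p$ one has $\textbf{e}(\ell t_p)=(-1)^{\ell-r(\ell)}\textbf{e}(r(\ell)t_p)$ since $\ell\equiv r(\ell)\pmod N$, so each summand is zero — or directly from (\ref{dirichlet}), since $\textbf{K}((p-p')/N)$ equals $N$ when $p=p'$ and $0$ for $0<|p-p'|<N$.

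Finally, the norm estimates fall straight out of (\ref{interpolationerror}): as $|\textbf{e}(\cdot)|=1$, the triangle inequality gives $|(\tilde{f}-\mathcal{S}f)_k|\le 2\sum_{|\ell|>\tilde{N}}|c_\ell|$ for every $k\in[m]$, which is the claimed $\ell_\infty$ bound, and summing the $p$-th powers over the $m$ coordinates yields $\|\tilde{f}-\mathcal{S}f\|_p\le m^{1/p}\cdot 2\sum_{|\ell|>\tilde{N}}|c_\ell|$. The one genuine hazard — and the step I expect to be the main obstacle — is the bookkeeping of the two distinct sources of sign factors (the $-\tfrac12$ grid shifts and the aliasing modulo the odd integer $N$) so that they combine into the single clean factor $(-1)^{\lfloor(\ell+\tilde{N})/N\rfloor}$ appearing in (\ref{interpolationerror}); the cancellation of low frequencies and the final norm bounds are then routine.
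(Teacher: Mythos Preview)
Your proposal is correct and follows essentially the same route as the paper: expand $f_p$ in its Fourier series, use orthogonality of the exponentials on the regular grid $\{t_p\}$ to collapse the $p$-sum, then reindex to identify the alias $r(\ell)$ and the sign $(-1)^{\lfloor(\ell+\tilde N)/N\rfloor}$, after which the low-frequency cancellation and the norm bounds are immediate. Your justification for interchanging the infinite $\ell$-sum with the finite $p$- and $u$-sums via absolute convergence (from $\sum_\ell|c_\ell|<\infty$) is in fact cleaner than the paper's, which instead detours through an auxiliary assumption that $f_p,\mathcal{S}_{kp}\neq 0$ and removes it afterward by a shifting argument.
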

The proof of this theorem is postponed until Section \ref{errorproof}. Therefore, the error due to $\mathcal{S}$ is proportional to the $1$-norm (or Wiener algebra norm) of the Fourier coefficients of $\textbf{f}$ that correspond to frequencies larger than $\tilde{N} = \frac{N-1}{2}$. In particular notice that if $c_{\ell} = 0$ for all $\ell > \tilde{N}$ we obtain perfect interpolation, as expected from standard results in signal processing (i.e., bandlimited signals consisting of trigonometric polynomials with finite degree $\leq\tilde{N}$). Despite the wide usage of trigonometric interpolation in applications \cite{strohmer1,margolis,stability}, such a result that gives a sharp error bound does not seem to exist in the literature. 

Notice that Theorem \ref{thmS} only holds for $\tilde{\tau}\subset\Omega$ as restricted in Section \ref{SignalModel}. However, the results continues to hold for $\tilde{\tau}$ unrestricted if we sample on the torus as imposed in Section \ref{SignalModel}. Therefore, the error bound will always hold under our setup.

\section{Anti-aliasing via Nonuniform Sampling}
\label{MainResult}

With the definitions and assumptions introduced in Section \ref{Methodology}, our methodology in this chapter will consist of modeling our $m$ nonuniform measurements via $\mathcal{S}$ and approximating the $s$ largest coefficients of $f$ in $\Psi$ (in the representation $f=\Psi g$). This discrete approximation will provide an accurate estimate $\textbf{f}^{\sharp}(x)$ of $\textbf{f}(x)$ for all $x\in\Omega$, achieving precision comparable to that given by the best $\frac{N}{2}$-bandlimited approximation of $\textbf{f}$ while requiring $m\ll N$ samples.

The following is a simplified statement, assuming that $\Psi$ is an orthonormal basis and $m\leq N$. We focus on this cleaner result for ease of exposition, presented as a corollary of the main result in Section \ref{mainresultproof}. The full statement considers the case $m\geq N$ and allows for more general and practical $\Psi$ that allow for reduced sample complexity.

\begin{theorem}
\label{CSsimple}
Let $2\leq s\leq N$ and $m\leq N$, where $m$ is the number of nonuniform samples. Under our signal model with Fourier expansion (\ref{Fexpansion}), let $\Psi\in\mathbb{C}^{N\times N}$ be an orthonormal basis with DFT-incoherence parameter $\gamma$. Define the interpolation kernel $\mathcal{S}$ as in Section \ref{dirichletsection} with the entries of $\Delta$ i.i.d. from any distribution satisfying our deviation model from Section \ref{DevModel} with $\theta < 1$. 

Define
\begin{equation}
\label{l1minsimple}
g^{\sharp}\coloneqq\argmin_{h\in\mathbb{C}^N}\lambda\|h\|_1+\frac{\sqrt{N}}{\sqrt{m}}\|\mathcal{S}\Psi h - b\|_2
\end{equation}
with
\[
0<\lambda \leq \frac{\sqrt{1-\theta}}{2\sqrt{2s}}.
\]

If
\begin{equation}
\label{samplecomplexityCS1}
m \geq \frac{C_1\gamma^2(1+\theta)}{(1-\theta)^2}
s\log^4\left(C_2N\right)
\end{equation}
where $C_1$ and $C_2$ are absolute constants, then
\begin{equation}
\label{errorbdsimple}
\|f-\Psi g^{\sharp}\|_{2} \leq \frac{8\epsilon_s(g)}{\sqrt{s}} + \left(\frac{4}{\lambda\sqrt{s}}+\frac{8\sqrt{2}}{\sqrt{1-\theta}}\right)\left(\frac{\sqrt{N}}{\sqrt{m}}\|d\|_2 + 2\sqrt{N}\sum_{\lvert\ell\rvert>\frac{N-1}{2}}\lvert c_{\ell}\rvert\right)
\end{equation}
with probability exceeding $1-\frac{1}{N}$.
\end{theorem}

Therefore, with $m\sim s\log^4(N)$ randomly perturbed samples we can recover $f$ with error (\ref{errorbdsimple}) proportional to the sparse model mismatch $\epsilon_s(g)$, the noise level $\|d\|_2$, and the error of the best $\frac{N-1}{2}$-bandlimited approximation of $\textbf{f}$ in the Wiener algebra norm (i.e., $\sum_{\lvert\ell\rvert>\frac{N-1}{2}}\lvert c_{\ell}\rvert$). As a consequence, we can approximate $\textbf{f}(x)$ for all $x\in\Omega$ as stated in the following corollary.

\begin{corollary}
\label{fullsignalerror}
Let $\emph{\textbf{h}}:\Omega\to\mathbb{C}^N$ be the vector valued function defined entry-wise for $\ell\in[N]$ as
\begin{equation}
\label{hvf}
\emph{\textbf{h}}(x)_{\ell} \coloneqq \frac{1}{\sqrt{N}}\textbf{e}(-x(\ell-\tilde{N}-1)),
\end{equation}
and define the function $\emph{\textbf{f}}^{\sharp}:\Omega\to\mathbb{C}$ via
\begin{equation}
\label{bandapprox}
\emph{\textbf{f}}^{\sharp}(x) = \langle \emph{\textbf{h}}(x),\mathcal{F}^*\Psi g^{\sharp}\rangle,
\end{equation}
where $g^{\sharp}$ is given by (\ref{l1minsimple}).

Then, under the assumptions of Theorem \ref{CSsimple}, 
\begin{align}
\label{errorbdsimple2}
\lvert\emph{\textbf{f}}(x) - \emph{\textbf{f}}^{\sharp}(x)\rvert &\leq \frac{8\epsilon_s(g)}{\sqrt{s}} + \left(\frac{4}{\lambda\sqrt{s}}+\frac{8\sqrt{2}}{\sqrt{1-\theta}}\right)\frac{\sqrt{N}}{\sqrt{m}}\|d\|_2\nonumber\\
&+\left(\frac{8\sqrt{N}}{\lambda\sqrt{s}}+\frac{16\sqrt{2N}}{\sqrt{1-\theta}}+2\right) \sum_{\lvert\ell\rvert>\frac{N-1}{2}}\lvert c_{\ell}\rvert
\end{align}
holds for all $x\in\Omega = [-\frac{1}{2},\frac{1}{2})$ with probability exceeding $1-\frac{1}{N}$.
\end{corollary}
The proof of this corollary is presented in Section \ref{errorproof}. In the case $\epsilon_s(g) = \|d\|_2 = 0$, the results intuitively say that we can recover a $\frac{N-1}{2}$-bandlimited approximation of $\textbf{f}$ with $\mathcal{O}(s\polylog(N))$ random off-the-grid samples. In the case of equispaced samples, $\mathcal{O}(N)$ measurements are needed for the same quality of reconstruction by the Nyquist-Shannon sampling theorem (or by Theorem \ref{thmS} directly). Thus, for compressible signals with $s\ll N$, random nonuniform samples provide a significant reduction in sampling complexity (undersampling) and simultaneously allow recovery of frequency components exceeding the sampling density (anti-aliasing). See Section \ref{discussion} for further discussion.

Notice that general denoising is not guaranteed in an undersampling scenario, due to the term $\frac{\sqrt{N}}{\sqrt{m}}\|d\|_2$ in (\ref{errorbdsimple}), and (\ref{errorbdsimple2}). In other words, one cannot expect the output estimate to reduce the measurement noise $\|d\|_2$ since $\frac{\sqrt{N}}{\sqrt{m}}\geq 1$ appearing in our error bound implies an amplification of the input noise level. Such situations with limited samples are typical in compressive sensing and this noise amplifying behavior is demonstrated numerically in Section \ref{noiseexp}. In general a practitioner must oversample (i.e., $N< m$) to attenuate the effects of generic noise. However, Theorem \ref{CSsimple} and Corollary \ref{fullsignalerror} state that nonuniform samples specifically attenuate aliasing noise.

\section{Denoising via Nonuniform Oversampling}
\label{1D}

In this section, we show that reduction in the noise level introduced during acquisition is possible given nonuniform samples whose average density exceeds the Nyquist rate (relative to a desired bandwidth). While the implications of this section are not surprising in the context of classical sampling theory, to the best of our knowledge such guarantees do not exist in the literature when the sampling points are nonuniform.

By removing the sparse signal model (Section \ref{SignalModel}), deviation model (Section \ref{DevModel}), and requiring $m\geq N$ off-the-grid samples (on the torus, see Section \ref{SignalModel}), we now use the numerically cheaper program of least squares. To reiterate, $\textbf{f}\in A(\Omega)$ with Fourier expansion $\sum_{\ell=-\infty}^{\infty}c_{\ell}\textbf{e}(\ell x)$ is our continuous signal of interest. With $N$ odd, $f \in \mathbb{C}^{N}$ is the discrete signal to be approximated, where $f_k = \textbf{f}(t_{k})$ for $t_k := \frac{k-1}{N}-\frac{1}{2}$. The vector $\tilde{f} \in \mathbb{C}^{m}$ encapsulates the nonuniformly sampled data where $\tilde{f}_k = \textbf{f}(\tilde{t}_{k})$ for $\tilde{t}_k := \frac{k-1}{m}-\frac{1}{2} + \Delta_k$. Noisy nonuniform samples are given as
\[
b = \tilde{f} + d \in \mathbb{C}^m,
\]
where the additive noise model, $d$, does not incorporate off-the-grid corruption.

In this oversampling context, we provide a denoising result for a more general set of deviations.

\begin{theorem}
\label{OSthm}
Let the entries of $\Delta$ be i.i.d. from any distribution and define 
\begin{equation}
\label{ls}
f^{\sharp} \coloneqq \argmin_{h\in\mathbb{C}^N}\|\mathcal{S} h - b\|_2.
\end{equation}
If $m= \kappa N\log(N)$ with $\kappa \geq \frac{4}{\log\left(e/2\right)}$, then
\begin{equation}
\label{boundls}
\|f- f^{\sharp}\|_{2} \leq \frac{2\sqrt{2}}{\sqrt{\kappa\log(N)}}\|d\|_2 + 4\sqrt{2N}\sum_{\lvert\ell\rvert>\frac{N-1}{2}}\lvert c_{\ell}\rvert
\end{equation}
with probability exceeding $1-\frac{1}{N}$.

\end{theorem}
The proof can be found in Section \ref{proofOS}. In this scenario, we oversample relative to the $\frac{N-1}{2}$-bandlimited output by generating a set samples with average density exceeding the Nyquist rate (step size $\frac{1}{N}$). With $m\geq \kappa N\log(N)$ for $\kappa\geq 1$, bound (\ref{boundls}) tells us that we can diminish the measurement noise level $\|d\|_2$ by a factor $ \sim\frac{1}{\sqrt{\kappa\log(N)}}$. The oversampling parameter $\kappa$ may be varied for increased attenuation at the cost of denser sampling. We comment that the methodology from Section \ref{MainResult} with $m\geq N$ also allows for denoising and similar error bounds (see Theorem \ref{CS}). However, focusing on the oversampling case distinctly provides simplified results with many additional benefits.

In particular, here the deviations $\Delta$ need not be from our deviation model in Section \ref{DevModel} and instead the result applies to perturbations generated by any distribution. This includes the degenerate distribution (deterministic), so the claim also holds in the case of equispaced samples. Furthermore, we no longer require the sparsity assumption and the result applies to all functions in the Wiener algebra. Finally, the recovery method (\ref{ls}) consists of standard least squares which can be solved cheaply relative to the square-root LASSO decoder (\ref{l1minsimple}) from the previous section.

We may proceed analogously to Corollary \ref{fullsignalerror} and show that the output discrete signal $f^{\sharp}$ provides a continuous approximation 
\[
\textbf{f}^{\sharp}(x) \coloneqq \langle \textbf{h}(x),\mathcal{F}^*f^{\sharp}\rangle\approx \textbf{f}(x) 
\]
for all $x\in\Omega$, where \textbf{h}(x) is defined in (\ref{hvf}). The error of this estimate is bounded as
\[
\lvert \textbf{f}^{\sharp}(x)-\textbf{f}(x)\rvert \leq \frac{2\sqrt{2}}{\sqrt{\kappa\log(N)}}\|d\|_2 + \left(4\sqrt{2N}+2\right)\sum_{\lvert\ell\rvert>\frac{N-1}{2}}\lvert c_{\ell}\rvert,
\]
proportional to the error of the best $\frac{N-1}{2}$-bandlimited approximation in the Wiener algebra norm while attenuating the introduced measurement noise. In the result, the structure of the deviated samples is quite general and accounts for many practical cases. 

While related results exist in the equispaced case (see for example Section 4 of \cite{oversampling}), Theorem \ref{OSthm} is the first such statement in a general non-equispaced case. The result therefore provides insight into widely applied techniques for the removal of unwanted noise, without making any assumptions on the noise structure.

\section{Discussion}
\label{discussion}

This section elaborates on several aspects of the results. Section \ref{novelty} discusses relevant work in the literature. Section \ref{DeviationDiscussion} provides examples of distributions that satisfy our deviation model and intuition of its meaning. Section \ref{SignalModelDiscussion} explores the $\gamma$ parameter with examples of transformations $\Psi$ that produce a satisfiable sampling complexity.

\subsection{Related Work}
\label{novelty}

Several studies in the compressive sensing literature are similar to our results in Section \ref{MainResult} (\cite{structured,stability}). In contrast to these references, we derive recovery guarantees for non-orthonormal systems (when $\theta\neq 0$) while focusing the scope of the paper within the context of classical sampling theory (introducing error according to the bandlimited approximation). The work in \cite{stability} considers sampling of sparse trigonometric polynomials and overlaps with our application in the case $\Psi = \mathcal{F}$. Our results generalize this work to allow for other signal models and sparsifying transforms. Furthermore, \cite{stability} assumes that the samples are chosen uniformly at random from a continuous interval or a discrete set of $N$ equispaced points. In contrast, our results pertain to general deviations from an equispaced grid with average sampling density $\sim s\polylog(N)$ and allow for many other distributions of the perturbations.

\subsection{Deviation Model}
\label{DeviationDiscussion}
In this section, we present several examples of distributions that are suitable for our results in Section \ref{MainResult}. Notice that our deviation model utilizes the characteristic function of a given distribution, evaluated at a finite set of points. This allows one to easily consider many distributions for our purpose by consulting the relevant and exhaustive literature of characteristic functions (see for example \cite{characteristic}).
\begin{enumerate}
	\item \underline{Uniform continuous}: $\mathcal{D} = \mathcal{U}[-\frac{1}{2m},\frac{1}{2m}]$ gives $\theta = 0$. To generalize this example, we may take $\mathcal{D} = \mathcal{U}[\mu-\frac{p}{2m},\mu+\frac{p}{2m}]$, for any $\mu\in\mathbb{R}$ and $p\in\mathbb{N}/\{0\}$ to obtain $\theta = 0$ (i.e., shift and dilate on the torus). Notice that with $p=m$, we obtain i.i.d. samples chosen uniformly from the whole interval $\Omega$ (as in \cite{stability}).
  \item \underline{Uniform discrete}: $\mathcal{D} = \mathcal{U}\{- \frac{1}{2m} + \frac{k}{m\bar{n}}\}_{k=0}^{\bar{n}-1}$ with $\bar{n}:=\lceil\frac{2(N-1)}{m}\rceil + 1$ gives $\theta = 0$. To generalize, we may shift and dilate $\mathcal{D} = \mathcal{U}\{\mu - \frac{p}{2m} + \frac{pk}{m\bar{n}}\}_{k=0}^{\bar{n}-1}$, for any $\mu\in\mathbb{R}$, $p\in\mathbb{N}/\{0\}$ and integer $\bar{n}>\frac{2(N-1)p}{m}$. We obtain $\theta = 0$ as well. 
	\item \underline{Normal}: $\mathcal{D} = \mathcal{N}(\mu,\bar{\sigma}^2)$, for any mean $\mu\in\mathbb{R}$ and variance $\bar{\sigma}^2>0$. Here 
 \[
 \theta = \frac{2N}{m}e^{-2(\bar{\sigma}\pi m)^2}. 
 \]
 In particular, for fixed $\bar{\sigma}$, $m$ may be chosen large enough to satisfy the conditions of Theorem \ref{CS} and vice versa.
	\item \underline{Laplace}: $\mathcal{D} = \mathcal{L}(\mu,b)$, for any location $\mu\in\mathbb{R}$ and scale $b>0$ gives 
 \[
 \theta = \frac{2N}{m(1+(2\pi bm)^2)}.
 \]
	\item \underline{Exponential}: $\mathcal{D} = \mbox{Exp}(\lambda)$, for any rate $\lambda>0$ gives 
 \[
 \theta = \frac{2N}{m\sqrt{1+4\pi^2m^2\lambda^{-2}}}.
 \]
\end{enumerate}
In particular, notice that examples 1 and 2 include cases of jittered sampling \cite{non15,non16,non17,non14}. Indeed, with $p=1$ these examples partition $\Omega$ into $m$ regions of equal size and these distributions will choose a point randomly from each region (in a continuous or discrete sense). The jittered sampling list can be expanded by considering other distributions to generate samples within each region. 

In general we will have $\theta > 0$, which implies deteriorated output quality and increases the number of required off-the-grid samples according to Theorem \ref{CSsimple}. Arguably, our deviation model introduces a notion of \emph{optimal} jitter when the chosen distribution achieves $\theta = 0$, ideal in our results. This observation may be of interest in the active literature of jittered sampling techniques \cite{non14}.

Intuitively, $\theta$ is measuring how biased a given distribution is in generating deviations. If $\delta\sim\mathcal{D}$, $\lvert\mathbb{E}\textbf{e}(jm\delta)\rvert \approx 0$ means that the distribution is nearly centered and impartial. On the other hand, $\lvert\mathbb{E}\textbf{e}(jm\delta)\rvert \approx 1$ gives the opposite interpretation where the deviations will be generated favoring a certain direction in an almost deterministic sense. Our result is not applicable to such biased distributions, since in Theorem \ref{CSsimple} as $\theta \rightarrow 1$ the error bound becomes unbounded and meaningless.

\subsection{Signal Model}
\label{SignalModelDiscussion}

In this section we discuss the DFT-incoherence parameter $\gamma$, introduced in Section \ref{SignalModel} as 
\[
\gamma = \max_{\ell\in [n]} \sum_{k=1}^{N}\lvert\langle \mathcal{F}_{*k},\Psi_{*\ell}\rangle\rvert,
\]
where we now let $\Psi\in\mathbb{C}^{N\times n}$ be a full column rank matrix with $n\leq N$. The parameter $\gamma$ a uniform upper bound on the $1$-norm of the discrete Fourier coefficients of the columns of $\Psi$. Since the decay of the Fourier coefficients of a function is related to its smoothness, intuitively $\gamma$ can be seen as a measure of the smoothness of the columns of $\Psi$. Implicitly, this also measures the smoothness of $\textbf{f}$ since its uniform discretization admits a representation via this transformation $f = \Psi g$. 

Therefore, the role of $\gamma$ on the sampling complexity is clear, relatively small $\gamma$ implies that our signal of interest is smooth and therefore requires less samples. This observation is intuitive, since non-smooth functions will require additional samples to capture discontinuities in accordance with Gibbs phenomenon. This argument is validated numerically in Section \ref{dftexp}, where we compare reconstruction via an infinitely differentiable ensemble (FFT) and a discontinuous wavelet (Daubechies 2).

We now consider several common choices for $\Psi$ and discuss the respective $\gamma$ parameter:
\begin{enumerate}
  \item $\Psi = \mathcal{F}$ (the DFT), then $\gamma = 1$ which is optimal. However, most appropriate and common is the choice $\Psi = \mathcal{F}^*$ which can be shown to exhibit $\gamma\sim \mathcal{O}(1)$ by a simple calculation.
  \item When $\Psi = \mathcal{H}^*$ is the inverse 1D Haar wavelet transform, we have $\gamma\sim \mathcal{O}(\log(N))$. In \cite{haar} it is shown that the inner products between rows of $\mathcal{F}$ and rows of $\mathcal{H}$ decay according to an inverse power law of the frequency (see Lemma 1 therein). A similar proof shows that $\lvert\langle \mathcal{F}_{*k},\mathcal{H}^{*}_{*\ell}\rangle\rvert\sim \frac{1}{\lvert k\rvert}$, which gives the desired upper bound for $\gamma$ via an integral comparison. Notice that these basis vectors have jump discontinuities, and yet we still obtain an acceptable DFT-incoherence parameter for nonuniform undersampling.
	\item $\Psi = \mathcal{I}_N$ (the $N\times N$ identity) gives $\gamma = \sqrt{N}$. This is the worst case scenario for normalized transforms since
	\[
\max_{v\in S^{N-1}} \sum_{k=1}^{N}\lvert\langle \mathcal{F}_{*k},v\rangle\rvert = \max_{v\in S^{N-1}} \sum_{k=1}^{N}\lvert\langle \mathcal{F}_{*k},\mathcal{F}^*v\rangle\rvert = \max_{v\in S^{N-1}} \sum_{k=1}^{N}\lvert v_k\rvert = \sqrt{N}.
	\]
	In general, our smooth signals of interest are not fit for this sparsity model.
	\item Let $p\geq 1$ be an integer. Matrices $\Psi$ whose columns are uniform discretizations of $p$-differentiable functions, with $p-1$ periodic and continuous derivatives and $p$-th derivative that is piecewise continuous. In this case $\gamma \sim \mathcal{O}(\log(N))$ if $p=1$ and $\gamma \sim \mathcal{O}(1)$ if $p\geq 2$. For sake of brevity we do not provide this calculation, but refer the reader to Section 2.8 in \cite{thesis} for an informal argument.
\end{enumerate}
Example 4 is particularly informative due to its generality and ability to somewhat formalize the intuition behind $\gamma$ previously discussed. This example implies the applicability of our result to a general class of smooth functions that agree nicely with our signal model defined in Section \ref{SignalModel} (functions in $A(\Omega)$).

\section{Numerical Experiments}
\label{experiments}

In this section we present numerical experiments to explore several aspects of our methodology and results. Specifically, we consider the effects of the DFT-incoherence and $\theta$ parameter in Section \ref{dftexp} and \ref{thetaexp} respectively. Section \ref{noiseexp} investigates the noise attenuation properties of nonuniform samples. We first introduce several terms and models to describe the setup of the experiments. Throughout we let $N = 2015$ be the size of the uniformly discretized signal $f$.

Program (\ref{l1minS}) with $\lambda = \frac{1}{2\sqrt{2s}}$ is solved using CVX \cite{cvx1,cvx2}, a MATLAB\textsuperscript{\tiny\textregistered} optimization
toolbox for solving convex problems. We implement the Dirichlet kernel using (\ref{dirichlet}) directly to construct $\mathcal{S}$. We warn the reader that in this section we have not dedicated much effort to optimize the numerical complexity of the interpolation kernel. For a faster implementation, we recommend instead applying the DFT/NDFT representation $\mathcal{S}= \mathcal{N}\mathcal{F}^*$ (see Section \ref{dirichletsection}) using NFFT 3 software from \cite{NFFT} or its parallel counterpart \cite{NFFTpara}. 

Given output $f^{\sharp} = \Psi g^{\sharp}$ with true solution $f$, we consider the relative norm of the reconstruction error as a measure of output quality, given as
\[
\mbox{Relative Error} = \frac{\|f^{\sharp}-f\|_2}{\|f\|_2}.\\
\]

\emph{Grid perturbations:} To construct the nonuniform grid $\tilde{\tau}$, we introduce an irregularity parameter $\rho\geq 0$. We define our perturbations by sampling from a uniform distribution, so that each $\Delta_k$ is drawn uniformly at random from $[-\frac{\rho}{m},\frac{\rho}{m}]$ for all $k\in[m]$ independently. Off-the-grid samples $\tilde{\tau}$ are generated independently for each signal reconstruction experiment.

\emph{Complex exponential signal model:} We consider bandlimited complex exponentials with random harmonic frequencies. With bandwidth $\omega = \frac{N-1}{2} = 1007$, and sparsity level $s = 50$ we generate $\vec{\omega}\in\mathbb{Z}^s$ by choosing $s$ frequencies uniformly at random from $\{-\omega,-\omega+1,\cdots,\omega\}$ and let
\[
\textbf{f}(x) = \sum_{k=1}^{s}\textbf{e}\left(\vec{\omega}_k x\right).
\]
We use the DFT as a sparsifying transform $\Psi = \mathcal{F}$ so that $g = \Psi^{*} f = \Psi^{-1} f$ is a $50$-sparse vector. This transform is implemented using MATLAB's fft function. The frequency vector, $\vec{\omega}$, is generated randomly for each independent set of experiments. Note that in this case we have optimal DFT-incoherence parameter $\gamma = 1$ (see Section \ref{SignalModelDiscussion}).

\emph{Gaussian signal model:} We consider a non-bandlimited signal consisting of sums of Gaussian functions. This signal model is defined as
\[
\textbf{f}(x) = -e^{-100x^2}+e^{-100(x-.104)^2}-e^{-100(x+.217)^2}.
\]
For this dataset, we use the Daubechies 2 wavelet as a sparsifying transform $\Psi$, implemented using the Rice Wavelet Toolbox \cite{rice}. This provides $g = \Psi^{*} f = \Psi^{-1} f$ that can be well approximated by a $50$-sparse vector. In other words, all entries of $g$ are non-zero but $\epsilon_{50}(g) < .088 \approx \frac{\|f\|_2}{250}$ and if $g_{50}$ is the best 50-sparse approximation of $g$ then $\|f-\Psi g_{50}\|_2 < .026 \approx \frac{\|f\|_2}{850}$. The smallest singular value of the transform is $\sigma_{2015}(\Psi) = 1$ and we have $\gamma \approx 36.62$, computed numerically.

\subsection{Effect of DFT-incoherence}
\label{dftexp}

This section is dedicated to exploring the effect of the DFT-incoherence parameter in signal reconstruction. We consider the complex exponential and Gaussian signal models described above. Recall that in the complex exponential model we have $\Psi = \mathcal{F}$ (the DFT) with optimal DFT-incoherence parameter $\gamma = 1$. In the Gaussian model $\Psi$ is the Daubechies 2 wavelet with $\gamma \approx 36.62$. Varying the number of nonuniform samples, we will compare the quality of reconstruction using both signal models with respective transforms to investigate the role of $\gamma$ in the reconstruction error. We consider the sparsity level $s=50$ and solve (\ref{l1minS}) with $\lambda = \frac{1}{2\sqrt{2s}} = \frac{1}{20}$, though the Gaussian signal model is not 50-sparse in the Daubechies domain (see last paragraph of this subsection for further discussion). 

Here we set irregularity parameter $\rho = \frac{1}{2}$ to generate the deviations (so that $\theta = 0$) and vary the average step size of the nonuniform samples. We do so by letting $m$ vary through the set $\{\lfloor\frac{N}{1.5}\rfloor,\lfloor\frac{N}{2}\rfloor,\lfloor\frac{N}{2.5}\rfloor,\cdots,\lfloor\frac{N}{10}\rfloor\}$. For each fixed value of $m$, the average relative error is obtained by averaging the relative errors of 50 independent reconstruction experiments. The results are shown in Figure \ref{fig:FvsD}, where we plot the average step size vs average relative reconstruction error.

\begin{figure}[htp]
\centering
\includegraphics[width=.75\textwidth]{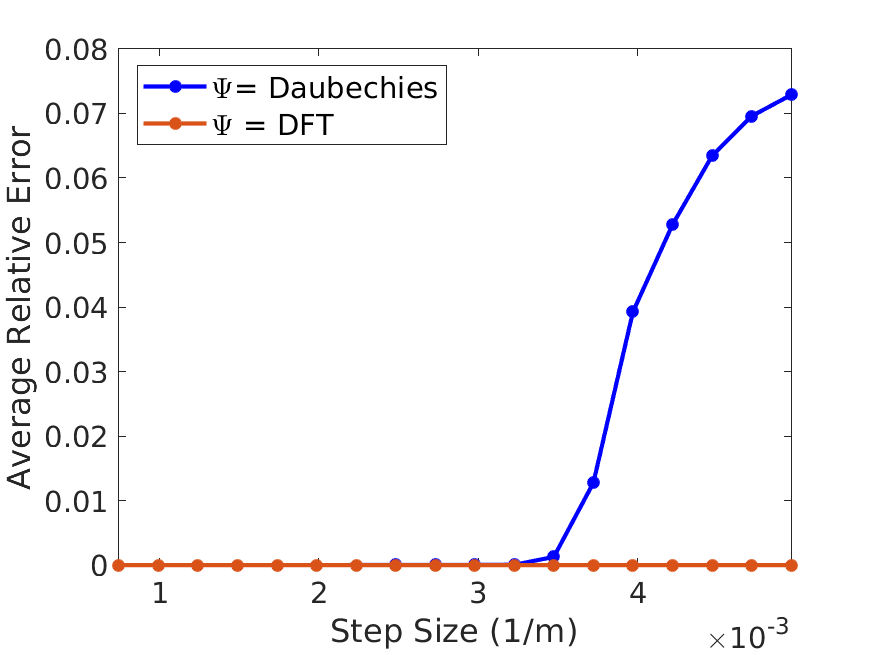}
\caption{Plot of average relative reconstruction error vs average step size for both signal models. In the complex exponential model ($\Psi =\mathcal{F}$, the DFT) we have $\gamma = 1$ and in the Gaussian signal model we have $\gamma \approx 36.62$ (Daubechies 2 wavelet). Notice that the complex exponential model allows for reconstruction from larger step sizes in comparison to the Gaussian signal model.}
\label{fig:FvsD}
\end{figure}

These experiments demonstrate the negative effect of larger DFT-incoherence parameters in signal reconstruction. Indeed, in Figure \ref{fig:FvsD} we see that the complex exponential model with $\gamma =1$ allows for accurate reconstruction from larger step sizes. This is to be expected from Section \ref{MainResult}, where the results imply that the Daubechies 2 wavelet will require more samples for successful reconstruction according to its parameter $\gamma \approx$ 36.62.

To appropriately interpret these experiments, it is important to note that the signal from the Gaussian model is only compressible and does not exhibit a 50-sparse representation via the Daubechies transform. Arguably, this may render the experiments of this section inappropriate to purely determine the effect of $\gamma$ since the impact of approximating the Gaussian signal with a 50-sparse vector may be significant and produce an unfair comparison (i.e., due the sparse model mismatch term $\epsilon_{50}(g)$ appearing in our error bound (\ref{errorbdsimple})). This is important for the reader to keep in mind, but we argue that the effect of this mismatch is negligible since in the Gaussian signal model with $g = \Psi^{-1} f$ we have $\epsilon_{50}(g) < \frac{\|f\|_2}{250}$ and if $g_{50}$ is the best 50-sparse approximation of $g$ then $\|f-\Psi g_{50}\|_2 < \frac{\|f\|_2}{850}$. This argument can be further validated with modified numerical experiments where $f$ does have a 50-sparse representation in the Daubechies domain, producing reconstruction errors with identical behavior and magnitude as those in Figure \ref{fig:FvsD}. Therefore, we believe our results here are informative to understand the impact of $\gamma$. For brevity, we do not present these modified experiments since such an $f$ will not longer satisfy the Gaussian signal model and complicate our discussion.

\subsection{Effect of the Deviation Model Parameter}
\label{thetaexp}

In this section we generate the deviations in such a way that vary the deviation model parameter $\theta$, in order to explore its effect on signal reconstruction. We only consider the complex exponential signal model for this purpose and fix $m = \lfloor\frac{N}{10}\rfloor = 201$.

We vary $\theta$ by generating deviations with irregularity parameter $\rho$ varying over $\{.001,.002,.003,\cdots,.009\}\bigcup\{.01,.02,.03,\cdots,.5\}$. For each fixed $\rho$ value we compute the average relative reconstruction error of 50 independent experiments. Notice that for each $k\in[m]$ and any $j$
\[
\mathbb{E}\textbf{e}\left(jm\Delta_k\right) = \frac{\sin\left(2\pi j \rho\right)}{2\pi j \rho}.
\]
Given $\rho$, we use this observation and definition (\ref{thetadef}) to compute the respective $\theta$ value by considering the maximum of the expression above over all $0<\lvert j\rvert\leq \lfloor\frac{N-1}{m}\rfloor = 10$. The relationship between $\rho$ and $\theta$ is illustrated in Figure \ref{fig:rho} (right plot), where smaller irregularity parameters $\rho\approx 0$ provide larger deviation model parameters $\theta$.

According to (\ref{thetadef}), this allows $\theta\in [0,20.05)$, which violates the assumption $\theta < 1$ of Theorem \ref{CSsimple} and does not allow (\ref{l1minS}) to be implemented with parameter in the required range
\[
0<\lambda \leq \frac{\sqrt{1-\theta}}{2\sqrt{2s}}.
\]
Despite this, we implement all experiments in this section with $\lambda = \frac{1}{2\sqrt{2s}} = \frac{1}{20}$ (where $s=50$). Such a fixed choice may not provide a fair set of results, since the parameter is not adapted in any way to the deviation model. Regardless, the experiments will prove to be informative while revealing the robustness of the square-root LASSO decoder with respect to parameter selection.

Figure \ref{fig:rho} plots $\theta$ vs average relative reconstruction error (left plot). In the plot, our main result (Theorem \ref{CSsimple}) is only strictly applicable in three cases (outlined in red, $\theta = 0,.409,.833$). However, the experiments demonstrate that decent signal reconstruction may be achieved when the condition $\theta < 1$ does not hold and the parameter $\lambda$ is not chosen appropriately. Therefore, the applicability of the methodology goes beyond the restrictions of the theorem and the numerical results demonstrate the flexibility of the square-root LASSO decoder.

\begin{figure}[htp]
\centering
\includegraphics[width=.495\textwidth]{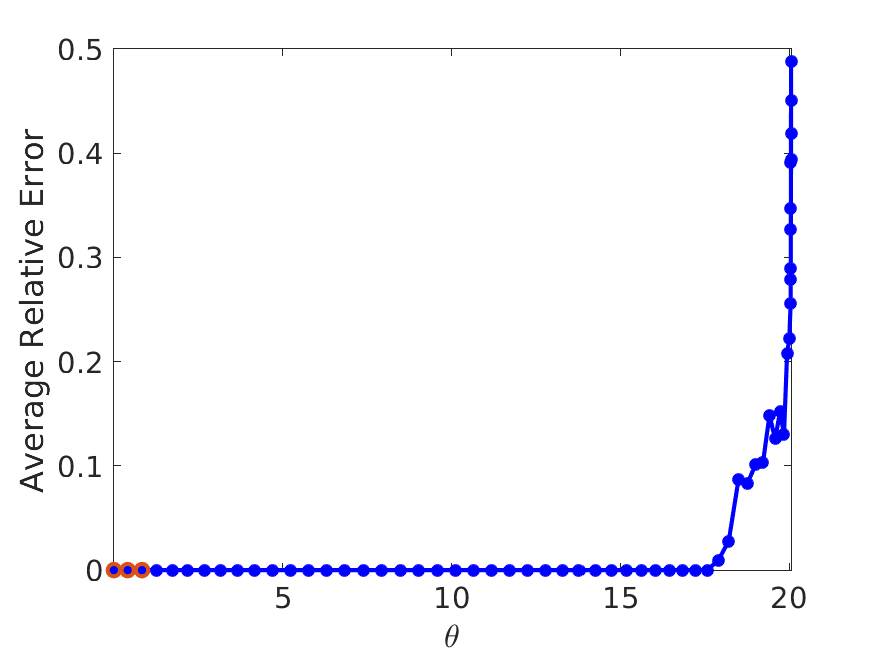}
\includegraphics[width=.495\textwidth]{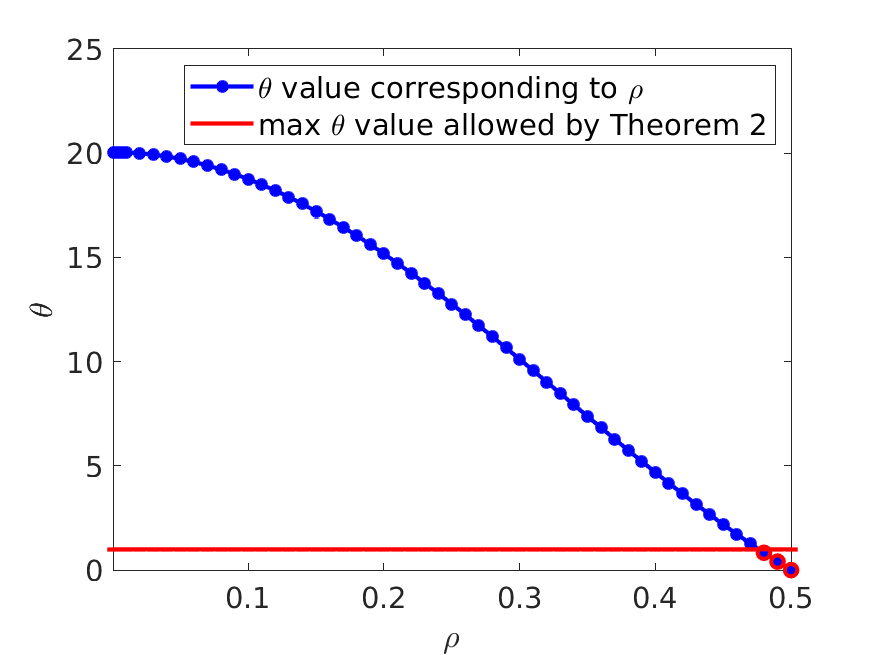}
\caption{(left) Plot of average relative reconstruction error vs corresponding $\theta$ parameter and (right) plot illustrating the relationship between the irregularity parameter $\rho$ and the deviation model parameter $\theta$. The plots emphasize via red outlines the $\theta$ values that satisfy the conditions of Theorem \ref{CSsimple} (i.e., $\theta < 1$). Although our results only hold for three $\theta$ values ($0,.409,.833$), the experiments demonstrate that accurate recovery is possible otherwise.}
\label{fig:rho}
\end{figure}

\subsection{Noise Attenuation}
\label{noiseexp}

This section explores the robustness of the methodology when presented with measurement noise, in both the undersampled and oversampled cases relative to the target bandwidth $\frac{N-1}{2}$ (Sections \ref{MainResult} and \ref{1D} respectively). We only solve the square-root LASSO problem (\ref{l1minS}) with $\lambda = \frac{1}{2\sqrt{2s}} = \frac{1}{20}$, and avoid the least squares problem (\ref{ls}) for brevity. However, we note that both programs produce similar results and conclusions in the oversampled case (see Theorem \ref{CS}). We only consider the bandlimited complex exponential signal model for this purpose. We generate additive random noise $d\in\mathbb{R}^m$ from a uniform distribution. Each entry of $d\in\mathbb{R}^{m}$ is i.i.d. from $[-\frac{\chi}{1000},\frac{\chi}{1000}]$ where $\chi = \frac{1}{\sqrt{m}}\|f\|_1$, chosen to maintain $\|d\|_2$ relatively constant as $m$ varies.

We set $\rho = \frac{1}{2}$ to generate the deviations (so that $\theta = 0$) and vary the average step size of the nonuniform samples. We do so by letting $m$ vary through the set $\{\lfloor\frac{N}{.5}\rfloor,\lfloor\frac{N}{.75}\rfloor,N,\cdots,\lfloor\frac{N}{6.75}\rfloor,\lfloor\frac{N}{7}\rfloor\}$, notice that only the first two cases correspond to oversampling. For each fixed value of $m$, the relative reconstruction error is obtained by averaging the result of 50 independent experiments. The results are shown in Figure \ref{fig:noise}, where we plot the average step size vs average relative reconstruction error and average relative input noise level $\|d\|_2/\|f\|_2$.

\begin{figure}[htp]
\centering
\includegraphics[width=.75\textwidth]{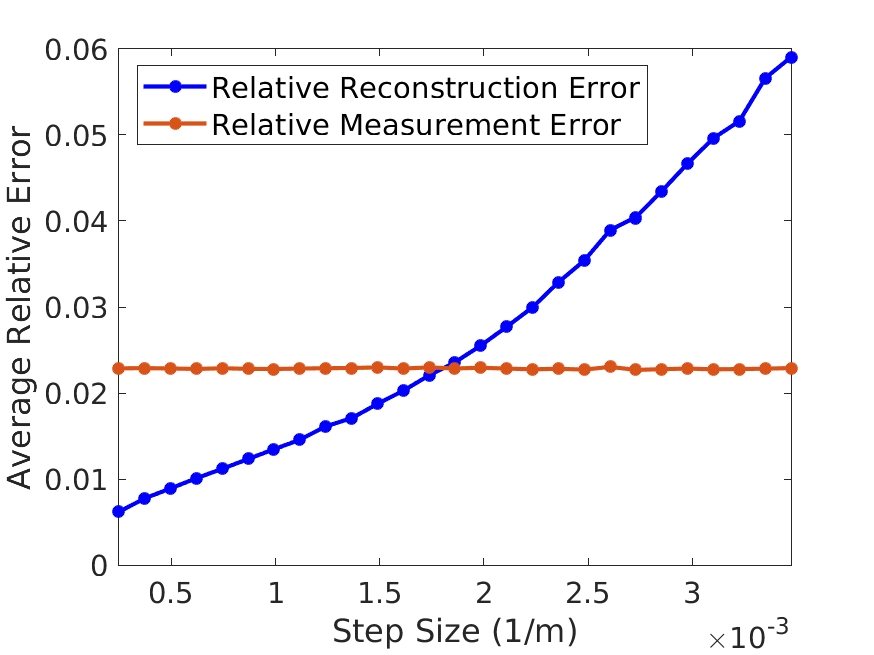}
\caption{Plot of average relative reconstruction error ($\|f-f^{\sharp}\|_2/\|f\|_2$) vs average step size (blue curve) and average input relative measurement error ($\|d\|_2/\|f\|_2$) vs average step size (red curve). Notice that the first 13 step size values achieve noise attenuation, i.e., the reconstruction error is lower than the input noise level.}
\label{fig:noise}
\end{figure}

The first two cases ($m = \lfloor\frac{N}{.5}\rfloor,\lfloor\frac{N}{.75}\rfloor$) correspond to oversampling and illustrate the results from Section \ref{1D} (and Theorem \ref{CS}), where attenuation of the input noise level is achieved. Surprisingly, these experiments demonstrate that nonuniform undersampling also allows for denoising. This is seen in Figure \ref{fig:noise}, where the values $m = \lfloor\frac{N}{1.25}\rfloor,\lfloor\frac{N}{1.5}\rfloor,\cdots,\lfloor\frac{N}{3.5}\rfloor$ correspond to sub-Nyquist rates and output an average relative reconstruction error smaller than the input measurement noise level. Thus, when nonuniform samples are not severely undersampled, the negative effects of random noise can be reduced.

\section{Conclusions}
\label{conclusion}

This paper provides a concrete framework to study the benefits of random nonuniform samples for signal acquisition (in comparison to equispaced sampling), with explicit statements that are informative for practitioners. Related observations are extensive but largely empirical in the sampling theory literature. Therefore, this work supplies novel theoretical insights on this widely discussed phenomenon. In the context of compressive sensing, we extend the applicability of this acquisition paradigm by demonstrating how it naturally intersects with standard sampling techniques. We hope that these observations will prompt a broader usage of compressive sensing in real world applications that rely on classical sampling theory.

There are several avenues for future research. First, the overall methodology requires the practitioner to know the nonuniform sampling locations $\tilde{\tau}$ accurately. While this is typical for signal reconstruction techniques that involve non-equispaced samples, it would be of practical interest to extend the methodology is such a way that allows for robustness to inaccurate sampling locations and even self-calibration. Further, as mentioned in Section \ref{experiments}, this work has not dedicated much effort to a numerically efficient implementation of the Dirichlet kernel $\mathcal{S}$. This is crucial for large-scale applications, where a direct implementation of the Dirichlet kernel via its Fourier or Dirichlet representation (see \cite{offgrid}) may be too inefficient for practical purposes. As future work, it would be useful to consider other interpolation kernels with greater numerical efficiency (e.g., a low order Lagrange interpolation operator). 

Finally, to explore the undersampling and anti-aliasing properties of nonuniform samples, our results here require a sparse signal assumption and adopt compressive sensing methodologies. However, most work that first discussed this nonuniform sampling phenomenon precedes the introduction of compressive sensing and does not explicitly impose sparsity assumptions. Therefore, to fully determine the benefits provided by off-the-grid samples it would be most informative to consider a more general setting, e.g., only relying on the smoothness of continuous-time signals. We believe the work achieved here provides a potential avenue to do so.

\section*{Acknowledgments}

This work was in part financially supported by the Natural Sciences and Engineering Research Council of Canada (NSERC) Collaborative Research and Development Grant DNOISE II (375142-08). This research was carried out as part of the SINBAD II project with support from the following organizations: BG Group, BGP, CGG, Chevron, ConocoPhillips, DownUnder GeoSolutions, Hess Corporation, Petrobras, PGS, Sub Salt Solutions, WesternGeco, and Woodside. \"Ozg\"ur Y\i lmaz also acknowledges an NSERC Discovery Grant (22R82411) and an NSERC Accelerator Award (22R68054). 

\section{Proofs}
\label{mainproof}

We now provide proofs to all of our claims. In Section \ref{mainresultproof} we prove Theorem \ref{CSsimple} via a more general result. Theorem \ref{OSthm} is proven in Section \ref{proofOS}. Section \ref{errorproof} establishes the Dirichlet kernel error bounds in Theorem \ref{thmS} and Corollary \ref{fullsignalerror}.

\subsection{Proof of Theorem \ref{CSsimple}}
\label{mainresultproof}

In this section, we will prove a more general result than Theorem \ref{CSsimple} assuming that $\Psi$ is a full column-rank matrix and allowing $m\geq N$. Theorem \ref{CSsimple} will follow from Theorem \ref{CS} by taking $\alpha = \beta = 1, n=N$, and simplifying some terms.

\begin{theorem}
\label{CS}
Let $2\leq s\leq n\leq N$ and $\Psi\in\mathbb{C}^{N\times n}$ be a full column rank matrix with DFT-incoherence parameter $\gamma$ and extreme singular values $\sigma_1(\Psi)\coloneqq\beta\geq\alpha\coloneqq\sigma_n(\Psi)>0$. Let the entries of $\Delta$ be i.i.d. from any distribution satisfying our deviation model with $\theta<1$. Define
\begin{equation}
\label{l1minfull}
g^{\sharp}\coloneqq\argmin_{h\in\mathbb{C}^n}\lambda\|h\|_1+\frac{\sqrt{N}}{\sqrt{m}}\|\mathcal{S}\Psi h - b\|_2
\end{equation}
with
\[
0<\lambda \leq \frac{\alpha\sqrt{1-\theta}}{2\sqrt{2s}}.
\]

If
\begin{align}
\label{samplecomplexityCS}
&m \geq \frac{C_1\gamma^2\beta^2(1+\theta)}{\alpha^4(1-\theta)^2}s\cdot\nonumber\\
&\left(\log\left(\frac{C_2\gamma^2\beta^2(1+\theta)}{\alpha^4(1-\theta)^2}s + 2\right)\log^2\left(\frac{C_2\beta^2(1+\theta)}{\alpha^2(1-\theta)}s\right)\log(n) + \log(n)\right)
\end{align}
where $C_1$ and $C_2$ are absolute constants, then
\[
\|f-\Psi g^{\sharp}\|_{2} \leq \frac{8\beta\epsilon_s(g)}{\sqrt{s}} + \left(\frac{4\beta}{\lambda\sqrt{s}}+\frac{8\beta\sqrt{2}}{\alpha\sqrt{1-\theta}}\right)\left(\frac{\sqrt{N}}{\sqrt{m}}\|d\|_2 + 2\sqrt{N}\sum_{\lvert\ell\rvert>\frac{N-1}{2}}\lvert c_{\ell}\rvert\right)
\]
with probability exceeding $1-\frac{1}{n}$.
\end{theorem}

This theorem generalizes Theorem \ref{CSsimple} to more general transformations $\Psi$ for sparse representation. This is more practical since the columns of $\Psi$ need not be orthogonal, instead linear independence suffices (with knowledge of the singular values $\alpha,\beta$). In particular notice that (\ref{samplecomplexityCS}) depends on $n$ and does not involve $N$, as opposed to $m\sim s\log^4(N)$ in (\ref{samplecomplexityCS1}). Since $n\leq N$, this general result allows for a potential reduction in sample complexity if the practitioner may construct $\Psi$ in such an efficient manner while still allowing a sparse and accurate representation of $f$.

Furthermore, notice that this more general result allows for oversampling $m\geq n$ or $m\geq N$. If we apply Theorem \ref{CS} with $s=n$ then $\epsilon_s(g) = 0$ and we obtain an error bound similar to those in Section \ref{1D}, reducing additive noise by a factor $\frac{\sqrt{N}}{\sqrt{n}\log^2(n)}$ from $m\sim n\log^4(n)$ off-the-grid samples. However, in this scenario the sparsifying transform is no longer of much relevance and it is arguably best to consider the approach of Section \ref{1D} which removes the need to consider $\gamma, \beta, \alpha$, and $\theta$ via a numerically cheaper methodology and a more general set of deviations.

To establish Theorem \ref{CS} we will consider the $\mathcal{G}$-adjusted restricted isometry property ($\mathcal{G}$-RIP) \cite{ben}, defined as follows:
\begin{definition}[$\mathcal{G}$-adjusted restricted isometry property \cite{ben}]
Let $1\leq s\leq n$ and $\mathcal{G}\in\mathbb{C}^{n\times n}$ be invertible. The $s$-th $\mathcal{G}$-adjusted Restricted Isometry Constant ($\mathcal{G}$-RIC) $\delta_{s,\mathcal{G}}$ of a matrix $\mathcal{A}\in\mathbb{C}^{m\times n}$ is the smallest $\delta > 0$ such that
\[
(1-\delta)\|\mathcal{G}v\|_2^2 \leq \|\mathcal{A}v\| \leq (1+\delta)\|\mathcal{G}v\|_2^2
\]
for all $v\in\{z\in\mathbb{C}^{n} \ \lvert \ \|z\|_0 \leq s\}$. If $0< \delta_{s,\mathcal{G}} < 1$ then the matrix $\mathcal{A}$ is said to satisfy the $\mathcal{G}$-adjusted Restricted Isometry Property ($\mathcal{G}$-RIP) of order $s$.
\end{definition}
This property ensures that a measurement matrix is well conditioned amongst all $s$-sparse signals, allowing for successful compressive sensing from $\sim s\polylog(n)$ measurements. Once established for our measurement ensemble, Theorem \ref{CS} will follow by applying the following result:
\begin{theorem}[Theorem 13.9 in \cite{ben}]
\label{GRIP}
Let $\mathcal{G}\in\mathbb{C}^{n\times n}$ be invertible and $\mathcal{A}\in\mathbb{C}^{m\times n}$ have the  $\mathcal{G}$-RIP of order $q$ and constant $0<\delta <1$ where
\begin{equation}
\label{tsparsity}
q = 2\left \lceil{4s\left(\frac{1+\delta}{1-\delta}\right)\|\mathcal{G}\|^2\|\mathcal{G}^{-1}\|^2}\right\rceil .
\end{equation}
Let $g\in\mathbb{C}^{n}$, $y = \mathcal{A}g+d\in\mathbb{C}^{m}$, and $\lambda\leq \frac{\sqrt{1-\delta}}{2\|\mathcal{G}^{-1}\|\sqrt{s}}$. Then 
\[
g^{\sharp}=\argmin_{h\in\mathbb{C}^n}\lambda\|h\|_1+\|\mathcal{A}h - y\|_2
\]
satisfies 
\begin{equation}
\label{Gerror}
\|g-g^{\sharp}\|_2 \leq \frac{8\epsilon_s(g)}{\sqrt{s}} + 8\left(\frac{1}{2\lambda\sqrt{s}}+\frac{\|\mathcal{G}^{-1}\|}{\sqrt{1-\delta}}\right)\|d\|_2.
\end{equation}
\end{theorem}

We therefore obtain our main result if we establish the $\mathcal{G}$-RIP for 
\[
\mathcal{A}:=\sqrt{N}\mathcal{S}\Psi .
\]
To do so, we note that our measurement ensemble is generated from a nondegenerate collection of independent families of random vectors. Such random matrices have been shown to possess the $\mathcal{G}$-RIP in the literature. To be specific, a nondegenerate collection is defined as follows:
\begin{definition}[Nondegenerate collection \cite{ben}]
Let $\boldsymbol{\mathscr{A}}_1,\cdots, \boldsymbol{\mathscr{A}}_m$ be independent families of random vectors on $\mathbb{C}^{n}$. The collection $\boldsymbol{\mathscr{C}}=\{\boldsymbol{\mathscr{A}}_k\}_{k=1}^{m}$ is nondegenerate if the matrix
\[
\frac{1}{m}\sum_{k=1}^{m}\mathbb{E}\left(a_ka_k^*\right),
\]
where $a_k\sim\boldsymbol{\mathscr{A}}_k$, is positive-definite. In this case, write $\mathcal{G}_{\boldsymbol{\mathscr{C}}}\in\mathbb{C}^{n\times n}$ for its unique positive-definite  square root.

\end{definition}

Our ensemble fits this definition, with the rows of $\mathcal{N}\in\mathbb{C}^{m\times N}$ generated from a collection of $m$ independent families of random vectors:
\begin{align*}
    \mathcal{N}_{k*}^* &= \frac{1}{\sqrt{N}}\begin{bmatrix}
           \textbf{e}\left(-\tilde{N}\left(\frac{k-1}{m} - \frac{1}{2} + \Delta_k\right)\right) \\
           \textbf{e}\left(-(\tilde{N}-1)\left(\frac{k-1}{m} - \frac{1}{2} + \Delta_k\right)\right) \\
           \vdots \\
           \textbf{e}\left(\tilde{N}\left(\frac{k-1}{m} - \frac{1}{2} + \Delta_k\right)\right)
         \end{bmatrix} \ \ \mbox{with} \ \ \Delta_k \sim \mathcal{D}.
  \end{align*}
Therefore, in our scenario, the $k$-th family $\boldsymbol{\mathscr{A}}_k$ independently generates deviation $\Delta_k \sim \mathcal{D}$ and produces a random vector of the form above as the $k$-th row of $\mathcal{N}$. This in turn also generates the rows of $\mathcal{A}$ independently, since its $k$-th row is given as $\sqrt{N}\mathcal{N}_{k*}\mathcal{F}^*\Psi$. To apply $\mathcal{G}$-RIP results from the literature for such matrices, we will have to consider the coherence of our collection:
\begin{definition}[Coherence of an unsaturated collection $\boldsymbol{\mathscr{C}}$ \cite{ben}] Let $\boldsymbol{\mathscr{A}}_1,\cdots, \boldsymbol{\mathscr{A}}_m$ be independent families of random vectors, with smallest constants $\mu_1,\cdots,\mu_k$ such that
\[
\|a_k\|_{\infty}^{2}\leq \mu_k
\]
holds almost surely for $a_k\sim\boldsymbol{\mathscr{A}}_k$. The coherence of an unsaturated collection $\boldsymbol{\mathscr{C}}=\{\boldsymbol{\mathscr{A}}_k\}_{k=1}^{m}$ is 
\[
\mu(\boldsymbol{\mathscr{C}}) = \max_{k\in [m]}\mu_k.
\]
\end{definition}
In the above definition, a family $\boldsymbol{\mathscr{A}}_k$ is saturated is it consists of a single vector and a collection is unsaturated if no family in the collection is saturated. In our context, it is easy to see that the condition $\theta < 1$ avoids saturation and the definition above applies. The coherence of our collection of families will translate to the DFT-incoherence parameter defined in Section \ref{SignalModel}.

With these definitions in mind, we now state a simplified version of Theorem 13.12 in \cite{ben} that will show the $\mathcal{G}$-RIP for our ensemble.
\begin{theorem}
\label{GRIPC}
Let $0<\delta$,$\epsilon<1$,$n\geq s\geq 2$, $\boldsymbol{\mathscr{C}}=\{\boldsymbol{\mathscr{A}}_k\}_{k=1}^{m}$ be a nondegenerate collection generating the rows of $\mathcal{A}$. Suppose that
\begin{equation}
\label{Gsampling}
m \geq \frac{\tilde{c}_1\|\mathcal{G}_{\boldsymbol{\mathscr{C}}}^{-1}\|^2\mu(\boldsymbol{\mathscr{C}})s}{\delta^2}\left(\log\left(2\left(\|\mathcal{G}_{\boldsymbol{\mathscr{C}}}^{-1}\|^2\mu(\boldsymbol{\mathscr{C}})s + 1\right)\right)\log^2(s)\log(n) + \log\left(\epsilon^{-1}\right)\right),
\end{equation}
where $\tilde{c}_1$ is an absolute constant. Then with probability at least $1-\epsilon$, the matrix $\mathcal{A}$ has the $\mathcal{G}$-RIP of order $s$ with constant $\delta_{s,\mathcal{G}} \leq \delta$.
\end{theorem}

In conclusion, to obtain Theorem \ref{CS} we will first show that $\mathcal{A}$ is generated by a nondegenerate collection with unique positive-definite square root $\mathcal{G}$. Establishing this will provide a upper bounds for $\|\mathcal{G}^{-1}\|$, $\|\mathcal{G}\|$, and $\mu(\boldsymbol{\mathscr{C}})$. At this point, Theorem \ref{GRIPC} will provide  $\mathcal{A}$ with the $\mathcal{G}$-RIP and subsequently Theorem \ref{GRIP} can be applied to obtain the error bounds.

To establish that the collection $\boldsymbol{\mathscr{C}}=\{\boldsymbol{\mathscr{A}}_k\}_{k=1}^{m}$ above is nondegenerate, it suffices to show that
\begin{equation}
\label{Ebounds}
\frac{1}{m}\mathbb{E}\|\mathcal{A}w\|_2^2 \leq \beta^2(1+\theta)\|w\|_2^2 \ \ \ \ \mbox{and} \ \ \ \ \frac{1}{m}\mathbb{E}\|\mathcal{A}w\|_2^2 \geq \alpha^2(1-\theta)\|w\|_2^2
\end{equation}
for all $w\in\mathbb{C}^{n}$. This will show that $\frac{1}{m}\mathbb{E}\mathcal{A}^*\mathcal{A}$ is positive-definite if the deviation model satisfies $\theta < 1$. Further, let $\mathcal{G}$ be the unique positive-definite square root of $\frac{1}{m}\mathbb{E}\mathcal{A}^*\mathcal{A}$, then (\ref{Ebounds}) will also show that
\begin{equation}
\label{Gbounds}
\|\mathcal{G}\| \leq \beta\sqrt{1+\theta} \ \ \mbox{and} \ \ \|\mathcal{G}^{-1}\| \leq \frac{1}{\alpha\sqrt{1-\theta}}.
\end{equation}

To this end, let $w\in\mathbb{C}^n$ and normalize $\tilde{\mathcal{N}} := \sqrt{N}\mathcal{N}$ so that for $k\in [m], \ell\in [N]$
\[
\tilde{\mathcal{N}}_{k\ell} := \textbf{e}(-\tilde{t}_k(\ell-\tilde{N}-1)).
\]
Throughout, let $\tilde{\Delta}\in\mathbb{R}$ be an independent copy of the entries of $\Delta\in\mathbb{R}^m$. Then with $v := \mathcal{F}^*\Psi w$,
\begin{align*}
& \frac{1}{m}\mathbb{E}\|\mathcal{A}w\|_2^2 = \frac{1}{m}\mathbb{E}\|\tilde{\mathcal{N}}\mathcal{F}^*\Psi w\|_2^2 := \frac{1}{m}\mathbb{E}\|\tilde{\mathcal{N}}v\|_2^2\\
&= \mathbb{E}\frac{1}{m}\sum_{k=1}^{m}\lvert\langle\tilde{\mathcal{N}}_{k*},v\rangle\rvert^2
= \mathbb{E}\frac{1}{m}\sum_{k=1}^{m}\Biggl\lvert\sum_{\ell=1}^{N}\textbf{e}(\tilde{t}_k(\ell-\tilde{n}-1))v_{\ell}\Biggr\rvert^2\\
&= \mathbb{E}\frac{1}{m}\sum_{k=1}^{m}\left(\sum_{\ell=1}^{N}\sum_{\tilde{\ell}=1}^{N}\textbf{e}(\tilde{t}_k(\ell-\tilde{\ell}))v_{\ell}\bar{v}_{\tilde{\ell}}\right)\\
&= \sum_{\ell=1}^{N}\sum_{\tilde{\ell}=1}^{N}v_{\ell}\bar{v}_{\tilde{\ell}}\left(\mathbb{E}\frac{1}{m}\sum_{k=1}^{m}\textbf{e}(\tilde{t}_k(\ell-\tilde{\ell}))\right)\\
&= \sum_{\ell=1}^{N}\lvert v_{\ell}\rvert^2 + \sum_{j= 1}^{\lfloor (N-1)/m \rfloor}\sum_{\ell-\tilde{\ell} = jm}v_{\ell}\bar{v}_{\tilde{\ell}}\mathbb{E}\textbf{e}(jm(\tilde{\Delta}-1/2)) \\ 
&+ \sum_{j= 1}^{\lfloor (N-1)/m \rfloor}\sum_{\ell-\tilde{\ell} = -jm}v_{\ell}\bar{v}_{\tilde{\ell}}\mathbb{E}\textbf{e}(-jm(\tilde{\Delta}-1/2)).
\end{align*}
The last equality can be obtained as follows,
\begin{align*}
&\mathbb{E}\frac{1}{m}\sum_{k=1}^{m}\textbf{e}(\tilde{t}_k(\ell-\tilde{\ell})) = \mathbb{E}\frac{1}{m}\sum_{k=1}^{m}\textbf{e}\left(\left(\frac{k-1}{m}-\frac{1}{2} + \Delta_k\right)(\ell-\tilde{\ell})\right) \\
&= \frac{1}{m}\sum_{k=1}^{m}\textbf{e}\left(\left(\frac{k-1}{m}-\frac{1}{2}\right)(\ell-\tilde{\ell})\right)\mathbb{E}\textbf{e}(\Delta_k(\ell-\tilde{\ell}))\\ 
&= \frac{1}{m}\sum_{k=1}^{m}\textbf{e}\left(\left(\frac{k-1}{m}-\frac{1}{2}\right)(\ell-\tilde{\ell})\right)\mathbb{E}\textbf{e}(\tilde{\Delta}(\ell-\tilde{\ell}))\\
&= \mathbb{E}\textbf{e}\left((\tilde{\Delta}-1/2)(\ell-\tilde{\ell})\right)\sum_{k=1}^{m}\frac{1}{m}\textbf{e}\left(\frac{k-1}{m}(\ell-\tilde{\ell})\right) \\ 
&=
\left\{
	\begin{array}{ll}
		1  & \mbox{if } \ell = \tilde{\ell} \\
		\mathbb{E}\textbf{e}\left(jm(\tilde{\Delta}-1/2)\right)  & \mbox{if } \ell - \tilde{\ell} = jm, j\in \mathbb{Z}/\{0\} \\
		0 & \mbox{otherwise.} 
	\end{array}
\right.
\end{align*}
The third equality uses the fact that $\mathbb{E}\textbf{e}(\Delta_k(\ell-\tilde{\ell})) = \mathbb{E}\textbf{e}(\tilde{\Delta}(\ell-\tilde{\ell}))$ for all $k\in [m]$ in order to properly factor out this constant from the sum in the fourth equality. The last equality is due to the geometric series formula. 

Returning to our original calculation, we bound the last term using our deviation model assumptions
\begin{align*}
&\Biggl\lvert\sum_{j= 1}^{\lfloor (N-1)/m \rfloor}\sum_{\ell-\tilde{\ell} = -jm}v_{\ell}\bar{v}_{\tilde{\ell}}\mathbb{E}\textbf{e}(-jm(\tilde{\Delta}-1/2))\Biggr\rvert \\
&= \Biggl\lvert\sum_{j= 1}^{\lfloor (N-1)/m \rfloor}\sum_{\ell\in Q_j}v_{\ell}\bar{v}_{\ell+jm}\mathbb{E}\textbf{e}(-jm(\tilde{\Delta}-1/2))\Biggr\rvert \\
&\leq \sum_{j= 1}^{\lfloor (N-1)/m \rfloor}\sum_{\ell\in Q_j}\lvert v_{\ell}\rvert \lvert v_{\ell+jm}\rvert \lvert\mathbb{E}\textbf{e}(-jm(\tilde{\Delta}-1/2))\rvert \\
&\leq \frac{\theta m}{2N}\sum_{j= 1}^{\lfloor (N-1)/m \rfloor}\sum_{\ell\in Q_j}\lvert v_{\ell}\rvert \lvert v_{\ell + jm}\rvert \leq \frac{\theta m}{2N}\sum_{j= 1}^{\lfloor (N-1)/m \rfloor}\lvert\langle v,v\rangle\rvert \\ 
&= \frac{\theta m \|v\|_2^2}{2N}\sum_{j= 1}^{\lfloor (N-1)/m \rfloor}1 = \frac{\theta m \|v\|_2^2}{2N}\left\lfloor{ \frac{N-1}{m}} \right\rfloor \leq \frac{\theta \|v\|_2^2}{2} .
\end{align*}
$Q_j\subset [N]$ is the index set of allowed $\ell$ indices according to $j$, i.e., that satisfy $\ell \in [N]$ and $\ell+jm \in [N]$. The second inequality holds by our deviation model assumption (\ref{thetadef}).

The remaining sum (with $\ell-\tilde{\ell} = jm$) can be bounded similarly. Combine these inequalities with the singular values of $\Psi$ to obtain
\[
\frac{1}{m}\mathbb{E}\|\mathcal{A}w\|_2^2 \leq \|v\|_2^2 + \frac{2\theta\|v\|_2^2}{2} := \|\Psi w\|_2^2\left(1 + \theta\right) \leq \beta^2\|w\|_2^2\left(1 + \theta\right),
\]
and
\[
\frac{1}{m}\mathbb{E}\|\mathcal{A}w\|_2^2 \geq \alpha^2\|w\|_2^2\left(1 - \theta\right).
\]
We will apply this inequality and similar orthogonality properties in what follows (e.g., in Section \ref{proofOS}), and ask the reader to keep this in mind. 

To upper bound the coherence of the collection $\boldsymbol{\mathscr{C}}$, let $\tilde{\mathcal{N}}_{k*}\mathcal{F}^*\Psi\sim \boldsymbol{\mathscr{A}}_k$ as above. Then
\begin{align*}
&\big\|\tilde{\mathcal{N}}_{k*}\mathcal{F}^*\Psi\big\|_{\infty} = \max_{\ell\in [n]}\big\lvert\langle\tilde{\mathcal{N}}_{k*},(\mathcal{F}^*\Psi)_{*\ell}\rangle \big\rvert \\ 
&\leq\max_{\ell\in[n]}\big\|\tilde{\mathcal{N}}_{k*}\big\|_{\infty}\|(\mathcal{F}^*\Psi)_{*\ell}\|_1 = \max_{\ell\in[n]}\sum_{k=1}^{N}\rvert\langle\mathcal{F}_{*k},\Psi_{*\ell}\rangle\lvert := \gamma 
\end{align*}
and therefore
\begin{equation}
\label{Ccoherence}
\mu(\boldsymbol{\mathscr{C}}) \leq \gamma^2.
\end{equation}

The proof of Theorem \ref{CS} is now an application of Theorems \ref{GRIPC} and \ref{GRIP} using the derivations above.

\begin{proof}[Proof of Theorem \ref{CS}]
We are considering the equivalent program
\[
g^{\sharp}\coloneqq\argmin_{h\in\mathbb{C}^n}\lambda\|h\|_1+\frac{1}{\sqrt{m}}\|\mathcal{A} h - \sqrt{N}b\|_2.
\]

From the arguments above, the rows of $\mathcal{A}$ are generated by a nondegenerate collection $\boldsymbol{\mathscr{C}}$ with coherence bounded as (\ref{Ccoherence}). The unique positive-definite square root of $\frac{1}{m}\mathbb{E}\mathcal{A}^*\mathcal{A}$, denoted $\mathcal{G}$, satisfies the bounds (\ref{Gbounds}).

We now apply Theorem \ref{GRIPC} with $\delta = 1/2$, $\epsilon = n^{-1}$ and order
\[
q = 2\left \lceil{4s\left(\frac{1+\delta}{1-\delta}\right)\|\mathcal{G}\|^2\|\mathcal{G}^{-1}\|^2}\right\rceil .
\]
By (\ref{Gbounds}) and (\ref{Ccoherence}), if
\begin{equation}
\label{Sproof}
m \geq \frac{\tilde{c}_1\gamma^2 q}{\delta^2\alpha^2(1-\theta)}\left(\log\left(2\left(\frac{\gamma^2 q}{\alpha^2(1-\theta)} + 1\right)\right)\log^2(q)\log(n) + \log\left(n\right)\right),
\end{equation}
then (\ref{Gsampling}) is satisfied and the conclusion of Theorem \ref{GRIPC} holds. Therefore, with probability exceeding $1-n^{-1}$, $\mathcal{A}$ has $\mathcal{G}$-RIP of order $q$ with constant $\delta_{q,\mathcal{G}}\leq \delta = 1/2$. 

To show that our sampling assumption (\ref{samplecomplexityCS}) satisfies (\ref{Sproof}), notice that by (\ref{Gbounds})
\[
 q = 2\left \lceil{12s\|\mathcal{G}\|^2\|\mathcal{G}^{-1}\|^2}\right\rceil \leq 2\left \lceil{12s\frac{\beta^2(1+\theta)}{\alpha^2(1-\theta)}}\right\rceil \leq 2\left(12\left(1+\frac{1}{24}\right)s\frac{\beta^2(1+\theta)}{\alpha^2(1-\theta)}\right) \coloneqq \tilde{q}.
\]
The last inequality holds since 
\[
\frac{12s\beta^2(1+\theta)}{\alpha^{2}(1-\theta)}\geq 12s\geq 24,
\]
and for any real number $a\geq 24$ it holds that $\lceil a\rceil \leq (1+\frac{1}{24})a$. In (\ref{Sproof}), replace $q$ with $\tilde{q}$. This provides our assumed sampling complexity, where expression (\ref{samplecomplexityCS}) simplifies by absorbing all absolute constants into $C_1$ and $C_2$. 

With parameter $\lambda$ chosen for (\ref{l1minfull}), the conditions of Theorem \ref{GRIP} hold with $\delta = 1/2$ and we obtain the error bound 
\[
\|g-g^{\sharp}\|_2 \leq \frac{8\epsilon_s(g)}{\sqrt{s}} + 8\left(\frac{1}{2\lambda\sqrt{s}}+\frac{\sqrt{2}}{\alpha\sqrt{1-\theta}}\right)\frac{\sqrt{N}}{\sqrt{m}}\|\mathcal{S}f - b\|_2.
\]
To finish, notice that
\[
\|g-g^{\sharp}\|_2 \geq \frac{1}{\beta}\|\Psi(g-g^{\sharp})\|_2 = \frac{1}{\beta}\|f-\Psi g^{\sharp}\|_2, 
\]
and
\[
\|\mathcal{S}f - b\|_2 \leq \|\mathcal{S}f-\tilde{f}\|_2 + \|d\|_2 \leq 2\sqrt{m}\sum_{\lvert \ell\rvert > \frac{N-1}{2}}\lvert c_{\ell}\rvert + \|d\|_2,
\]
where the last inequality holds by Theorem \ref{thmS}.
\end{proof}

To obtain Theorem \ref{CSsimple} from Theorem \ref{CS}, notice that in Theorem \ref{CSsimple} we have $n=N$ and $\alpha=\beta=1$. The assumption $m\leq N$ gives that
\[
N \geq \frac{\gamma^2(1+\theta)s}{(1-\theta)^2} \geq \frac{(1+\theta)s}{(1-\theta)^2},
\]
which allows further simplification by combining all the logarithmic factors into a single $\polylog(N)$ term (introducing absolute constants where necessary). We note that the condition $m\leq N$ is not needed and is only applied for ease of exposition in the introductory result.

\subsection{Proof of Theorem \ref{OSthm}}
\label{proofOS}

To establish the claim, we aim to show that
\begin{equation}
\label{conditionOS}
\inf_{v\in S^{N-1}}\|\mathcal{S} v\|_2 \geq \delta>0,
\end{equation}
holds with high probability. By optimality of $f^{\sharp}$, this will give
\begin{align*}
&\|f - f^{\sharp}\|_2 \leq \frac{1}{\delta}\|\mathcal{S}(f - f^{\sharp})\|_2 \leq \frac{1}{\delta}\|\mathcal{S}f - b\|_2 + \frac{1}{\delta}\|b - \mathcal{S}f^{\sharp}\|_2 \\
&\leq \frac{2}{\delta}\|\mathcal{S}f - b\|_2 \leq \frac{2}{\delta}\|d\|_2 + \frac{4}{\delta} \sqrt{m}\sum_{\lvert\ell\rvert>\frac{N-1}{2}}\lvert c_{\ell}\rvert,
\end{align*}
where the last inequality is due to our noise model and trigonometric interpolation error (Theorem \ref{thmS}).

To this end, we normalize by letting $\tilde{\mathcal{S}} = \frac{1}{\sqrt{m}}\tilde{\mathcal{N}}\mathcal{F}^* := \frac{\sqrt{N}}{\sqrt{m}}\mathcal{N}\mathcal{F}^*$ and note that when $m\geq N$ our sampling operator is isometric in the sense that 
\begin{equation}
\label{isotropyOS}
\mathbb{E}\tilde{\mathcal{S}}^*\tilde{\mathcal{S}} =  \mathcal{F}\left(\frac{1}{m}\mathbb{E}\tilde{\mathcal{N}}^*\tilde{\mathcal{N}}\right)\mathcal{F}^* = \mathcal{I}_{N}
\end{equation}
where $\mathcal{I}_{N}$ is the $N\times N$ identity matrix. To see this, we use our calculations from the previous section (that establish (\ref{Ebounds})) to obtain as before that for $\ell,\tilde{\ell}\in [N]$
\begin{align*}
&\mathbb{E}\left(\frac{1}{m}\tilde{\mathcal{N}}^*\tilde{\mathcal{N}}\right)_{\ell\tilde{\ell}} = \frac{1}{m}\mathbb{E}\langle \tilde{\mathcal{N}}_{*\ell},\tilde{\mathcal{N}}_{*\tilde{\ell}}\rangle = \frac{1}{m}\mathbb{E}\sum_{k=1}^{m}\textbf{e}(\tilde{t}_k(\ell-\tilde{\ell}))\\
&=
\left\{
	\begin{array}{ll}
		1  & \mbox{if } \ell = \tilde{\ell} \\
		\mathbb{E}\textbf{e}\left(jm(\tilde{\Delta}-1/2)\right)  & \mbox{if } \ell - \tilde{\ell} = jm, j\in \mathbb{Z}/\{0\} \\
		0 & \mbox{otherwise.} 
	\end{array}
\right.
\end{align*}
However, if $m\geq N$, notice that the middle case never occurs since $\lvert\ell-\tilde{\ell}\rvert\leq N-1 < m$ for all $\ell,\tilde{\ell}\in [N]$. Therefore, (\ref{isotropyOS}) holds.

With the isometry established, we may now proceed to the main component of the proof of Theorem \ref{OSthm}.

\begin{theorem}
\label{RIPOS}
Let $m\geq \kappa N$ with $\kappa \geq \frac{2\log(N)}{\log(\sqrt{e}/\sqrt{2})}$ and the entries of $\Delta$ be i.i.d. with any distribution. Then
\[
\inf_{v\in S^{N-1}}\|\mathcal{S} v\|_2 \geq \frac{\sqrt{m}}{\sqrt{2N}},
\]
with probability exceeding $1-\frac{1}{N}$.

\end{theorem}

\begin{proof}[Proof:]
We will apply a matrix Chernoff inequality to lower bound the smallest eigenvalue of $\tilde{\mathcal{S}}^*\tilde{\mathcal{S}}$. To apply Theorem 1.1 in \cite{user}, notice that we can expand
\[
\tilde{\mathcal{S}}^*\tilde{\mathcal{S}} = \sum_{k=1}^{m}\tilde{\mathcal{S}}_{k*}^*\tilde{\mathcal{S}}_{k*},
\]
which is a sum of independent, random, self-adjoint, and positive-definite matrices. Our isometry condition (\ref{isotropyOS}) gives that $\mathbb{E}\tilde{\mathcal{S}}^*\tilde{\mathcal{S}} =\mathcal{I}_N$ has extreme eigenvalues equal to 1, we stress that this holds because we assume $m\geq N$ as shown above. Further,
\[
\|\tilde{\mathcal{S}}_{k*}^*\tilde{\mathcal{S}}_{k*}\| = \frac{1}{m} \|\mathcal{F}\tilde{\mathcal{N}}_{k*}^*\tilde{\mathcal{N}}_{k*}\mathcal{F}^*\| = \frac{1}{m} \|\tilde{\mathcal{N}}_{k*}^*\tilde{\mathcal{N}}_{k*}\| = \frac{N}{m}.
\]
Therefore, by Theorem 1.1 in \cite{user} with $R=\frac{N}{m}$ and $\delta = \frac{1}{2}$, we obtain
\[
\mathbb{P}\left(\lambda_{\mbox{min}}\left(\tilde{\mathcal{S}}^*\tilde{\mathcal{S}}\right)\leq \frac{1}{2} \right) \leq N\left(\frac{\sqrt{2}}{\sqrt{e}}\right)^{m/N}.
\]
With $m\geq \kappa N$ and $\kappa \geq \frac{2\log(N)}{\log(\sqrt{e}/\sqrt{2})}$, the left hand side is upper bounded by $N^{-1}$. Since the singular values of $\tilde{\mathcal{S}}$ are the squareroot of the eigenvalues of $\tilde{\mathcal{S}}^*\tilde{\mathcal{S}}$, this establishes the result.
\end{proof}

With our remarks in the beginning of the section, we can now easily establish the proof of Theorem \ref{OSthm}.

\begin{proof}[Proof of Theorem \ref{OSthm}:]
Under our assumptions, apply Theorem \ref{RIPOS} to obtain that for all $v\in S^{N-1}$
\[
\inf_{v\in S^{N-1}}\|\mathcal{S}v\|_2 \geq \frac{\sqrt{m}}{\sqrt{2N}}
\]
holds with the prescribed probability. This establishes (\ref{conditionOS}) with $\delta = \frac{\sqrt{m}}{\sqrt{2N}}$. The remainder of the proof follows from our outline in the beginning of the section.
\end{proof}

\subsection{Interpolation Error of Dirichlet Kernel: Proof}
\label{errorproof}

In this section we provide the error term of our interpolation operator when applied to our signal model (Theorem \ref{thmS}) and also the error bound given in Corollary \ref{fullsignalerror}.

\begin{proof}[Proof of Theorem \ref{thmS}:]
We begin by showing (\ref{perfectinterpolation}), i.e., if $\tilde{t}_k = t_{\tilde{p}}$ for some $\tilde{p}\in [N]$ (our ``nonuniform'' sample lies on the equispaced interpolation grid) then the error is zero. This is easy to see by orthogonality of the complex exponentials, combining (\ref{fft}), (\ref{nfft}) (recall that $\tilde{N} = \frac{N-1}{2}$) we have
\begin{align*}
&(\mathcal{S}f)_{k} = \langle f, \mathcal{S}_{k*}\rangle = \sum_{p=1}^{N} f_p\mathcal{S}_{kp} = \frac{1}{N}\sum_{p=1}^{N} f_p\left(\sum_{u=-\tilde{N}}^{\tilde{N}}\textbf{e}(ut_p)\textbf{e}(-u\tilde{t}_k)\right) \\ 
&= \frac{1}{N}\sum_{p=1}^{N} f_p\left(\sum_{u=-\tilde{N}}^{\tilde{N}}\textbf{e}(ut_p)\textbf{e}(-ut_{\tilde{p}})\right) = f_{\tilde{p}} = \mathbf{f}(t_{\tilde{p}}) = \mathbf{f}(\tilde{t}_k) = \tilde{f}_k.
\end{align*}
The fourth equality holds since we are assuming $\tilde{t}_k = t_{\tilde{p}}$ for some $\tilde{p}\in [N]$.

We now deal with the general case (\ref{interpolationerror}). Recall the Fourier expansion of our underlying function
\[
\mathbf{f}(x) = \sum_{\ell=-\infty}^{\infty}c_{\ell}\mathbf{e}(\ell x).
\]
Again, using (\ref{fft}), (\ref{nfft}) and the Fourier expansion at $\textbf{f}(t_p) = f_p$ we obtain
\begin{align*}
&(\mathcal{S}f)_{k} = \langle f, \mathcal{S}_{k*}\rangle = \sum_{p=1}^{N} f_p\mathcal{S}_{kp}\\
&:= \frac{1}{N}\sum_{p=1}^{N} \left(\sum_{\ell=-\infty}^{\infty}c_{\ell}\mathbf{e}(\ell t_p)\right)\left(\sum_{u=-\tilde{N}}^{\tilde{N}}\textbf{e}(ut_p)\textbf{e}(-u\tilde{t}_k)\right). \\
\end{align*}
At this point, we wish to switch the order of summation and sum over all $p\in[N]$. We must assume the corresponding summands are non-zero. To this end, we continue assuming $f_p, \mathcal{S}_{kp} \neq 0$ for all $p\in [N]$. We will deal with these cases separately afterward. In particular we will remove this assumption for the $f_p$'s and show that $\mathcal{S}_{kp} \neq 0$ under our assumption $\tilde{\tau}\subset\Omega$. 

Proceeding, we may now sum over all $p\in [N]$ to obtain
\begin{align*}
(\mathcal{S}f)_{k} &= \frac{1}{N}\sum_{u=-\tilde{N}}^{\tilde{N}}\sum_{\ell=-\infty}^{\infty}c_{\ell}\textbf{e}(-u\tilde{t}_k)\sum_{p=1}^{N}\textbf{e}((u + \ell)t_p)\\
&= \sum_{u=-\tilde{N}}^{\tilde{N}}\sum_{j=-\infty}^{\infty}(-1)^{jN}c_{jN+u}\textbf{e}(u\tilde{t}_k)
= \sum_{j=-\infty}^{\infty}(-1)^{\lfloor\frac{j+\tilde{N}}{N}\rfloor}c_j\textbf{e}(r(j)\tilde{t}_k).
\end{align*}
The second equality is obtained by orthogonality of the exponential basis functions, $\sum_{p=1}^{N}\textbf{e}((u + \ell)t_p) = 0$ when $\ell+u\notin N\mathbb{Z}$ and otherwise equal to $N(-1)^{jN}$ for some $j\in\mathbb{Z}$ where $u+\ell = jN$. The last equality results from a reordering of the absolutely convergent series where the mapping $r$ is defined as in the statement of Theorem \ref{thmS}. 

To illustrate the reordering, we consider $j\geq 0$ (for simplicity) and first notice that $(-1)^{jN} = (-1)^j$ since $N$ is assumed to be odd in Section \ref{SignalModel}. Aesthetically expanding the previous sum gives
\begin{align*}
\sum_{u=-\tilde{N}}^{\tilde{N}}\sum_{j=0}^{\infty} (-1)^j c_{jN+u}\textbf{e}\left(u\tilde{t}_k\right)& = \\
\hspace{15pt} \textbf{e}(-\tilde{N}\tilde{t}_k)&\left(c_{-\tilde{N}} \hspace{10pt} -c_{N-\tilde{N}}\hspace{10pt} + c_{2N-\tilde{N}}\hspace{10pt} - \ldots\right)\\
+ \hspace{5pt} \textbf{e}((-\tilde{N}+1)\tilde{t}_k)&\left(c_{-\tilde{N}+1} -c_{N-\tilde{N}+1} + c_{2N-\tilde{N}+1} - \ldots \right)\\
&\vdots\\
+ \hspace{33pt} \textbf{e}(0\cdot\tilde{t}_k)&\left(\hspace{1pt}c_{0}\hspace{20pt} -c_{N}\hspace{24pt} + c_{2N} \hspace{23pt} - \ldots\right)\\
&\vdots\\
+ \hspace{36pt}\textbf{e}(\tilde{N}\tilde{t}_k)&\left(c_{\tilde{N}}\hspace{16pt} -c_{N+\tilde{N}}\hspace{11pt} + c_{2N+\tilde{N}} \hspace{10pt} - \ldots\right).
\end{align*}
Notice that in the first row starting at the second coefficient we have indices $N-\tilde{N} = \tilde{N}+1$ followed by $2N-\tilde{N} = N+\tilde{N}+1$ and so on, which are subsequent to the indices of the coefficients in the last row (one column prior). Therefore, if start at the top left coefficient $c_{-\tilde{N}}$ and ``column-wise'' traverse this infinite array of Fourier coefficients we will obtain the ordered sequence $\{(-1)^{\lfloor\frac{j+\tilde{N}}{N}\rfloor}c_j\}_{j=-\tilde{N}}^{\infty}$ (with no repetitions). 

The coefficients in row $q\in[N]$ correspond to frequency value $-\tilde{N}+q-1$ and have indices of the form $pN-\tilde{N}+q-1$ for some $p\in\mathbb{N}$. To establish that the reordered series is equivalent, we finish by checking that for a given index the mapping $r$ gives the correct frequency value, i.e., $r(pN-\tilde{N}+q-1) = -\tilde{N}+q-1$ for all $q\in[N]$:
\begin{align*}
&r(pN-\tilde{N}+q-1) \coloneqq \mbox{rem}(pN-\tilde{N}+q-1 + \tilde{N}, N) - \tilde{N}\\
&= \mbox{rem}(pN+q-1, N) - \tilde{N} = q-1 - \tilde{N}.
\end{align*}
We can therefore reorder the series as desired and incorporate the sum over $j<0$ via the same logic to establish the equality. 

Since for $\ell\in\{-\tilde{N}, -\tilde{N} + 1, \cdots, \tilde{N}\}$ we have $r(\ell) = \ell$ and $(-1)^{\lfloor\frac{\ell+\tilde{N}}{N}\rfloor} = 1$, we finally obtain
\begin{align*}
&\textbf{f}(\tilde{t}_k)-(\mathcal{S}f)_{k}  = \sum_{\lvert\ell\rvert>\tilde{N}}c_{\ell}\left(\textbf{e}(\ell\tilde{t}_k) - (-1)^{\lfloor\frac{\ell+\tilde{N}}{N}\rfloor}\textbf{e}(r(\ell)\tilde{t}_k)\right). 
\end{align*}
The definition of the $p$-norms along with the triangle inequality give the remaining claim. In particular,
\begin{align*}
&\|\tilde{f}-\mathcal{S}f\|_p = \left(\sum_{k=1}^{m} \lvert\textbf{f}(\tilde{t}_k)-(\mathcal{S}f)_{k}\rvert^p\right)^{1/p}\\
& = \left(\sum_{k=1}^{m} \Biggl\lvert\sum_{\lvert\ell\rvert>\tilde{N}}c_{\ell}\left(\textbf{e}(\ell\tilde{t}_k) - (-1)^{\lfloor\frac{\ell+\tilde{N}}{N}\rfloor}\textbf{e}(r(\ell)\tilde{t}_k)\right)\Biggr\rvert^p\right)^{1/p}\\
&\leq \left(\sum_{k=1}^{m} \left(\sum_{\lvert\ell\rvert>\tilde{N}}2\lvert c_{\ell}\rvert\right)^p\right)^{1/p} 
 = \left(m \left(\sum_{\lvert\ell\rvert>\tilde{N}}2\lvert c_{\ell}\rvert\right)^p\right)^{1/p} = 2m^{1/p} \sum_{\lvert\ell\rvert>\tilde{N}}\lvert c_{\ell}\rvert.
\end{align*}

This finishes the proof in the case $f_p, \mathcal{S}_{kp} \neq 0$ for all $p\in [N]$. To remove this condition for the $f_p$'s, we may find a real number $\mu$ such that the function
\[
\mathbf{g}(x) \coloneqq \mathbf{f}(x) + \mu = \sum_{\ell\in (-\infty,\infty)\cap\mathbb{Z}/\{0\}}c_{\ell}\mathbf{e}(\ell x) + c_0 + \mu
\]
is non-zero when $x \in \{t_p\}_{p=1}^{N}$. In particular notice that if we define $h = f + \mu \in \mathbb{C}^N$, then $h_p \neq 0$ for all $p\in [N]$. Therefore, only assuming now that $\mathcal{S}_{kp} \neq 0$ for $p\in [N]$, the previous argument can be applied to conclude
\begin{align*}
&\textbf{g}(\tilde{t}_k)-(\mathcal{S}h)_{k}  = \sum_{\lvert\ell\rvert>\tilde{N}}c_{\ell}\left(\textbf{e}(\ell\tilde{t}_k) - (-1)^{\lfloor\frac{\ell+\tilde{N}}{N}\rfloor}\textbf{e}(r(\ell)\tilde{t}_k)\right).
\end{align*}
However, if $1_N\in\mathbb{C}^{N}$ denotes the all ones vector and $e_{\tilde{N}+1}\in\mathbb{C}^{N}$ is the $\tilde{N}+1$-th standard basis vector, notice that
\begin{align*}
&(\mathcal{S}h)_{k} = \langle \mathcal{S}_{k*},h\rangle = \langle \mathcal{S}_{k*},f\rangle + \mu\langle \mathcal{S}_{k*},1_N\rangle = \langle \mathcal{S}_{k*},f\rangle + \mu\langle \mathcal{N}_{k*},\mathcal{F}^*1_N\rangle \\ 
&= \langle \mathcal{S}_{k*},f\rangle + \mu\sqrt{N}\langle \mathcal{N}_{k*}, e_{\tilde{N}+1}\rangle = \langle \mathcal{S}_{k*},f\rangle + \mu = (\mathcal{S}f)_{k} + \mu.
\end{align*}
The fourth equality holds by orthogonality of $\mathcal{F}^*$ and since $\mathcal{F}_{(\tilde{N}+1)*}^* = \frac{1}{\sqrt{N}}1_N$. The fifth inequality holds since $\mathcal{N}_{k(\tilde{N}+1)}=\frac{1}{\sqrt{N}}$. Therefore
\[
\textbf{g}(\tilde{t}_k)-(\mathcal{S}h)_{k} = \textbf{f}(\tilde{t}_k)+\mu-\left((\mathcal{S}f)_{k} + \mu\right) = \textbf{f}(\tilde{t}_k)-(\mathcal{S}f)_{k},
\]
and the claim holds in this case as well.

The assumption $\mathcal{S}_{kp} \neq 0$ will always hold if $\tilde{\tau}\subset\Omega$, i.e., $\tilde{t}_k\in[-\frac{1}{2},\frac{1}{2})$ for all $k\in [m]$. We show this case by deriving conditions under which this occurs. As noted before, we have
\begin{align*}
&\mathcal{S}_{kp} := \sum_{u=-\tilde{N}}^{\tilde{N}}\textbf{e}(u(t_p-\tilde{t}_k)) = \sum_{u=0}^{N-1}\textbf{e}(u(t_p-\tilde{t}_k))\textbf{e}(-\tilde{N}(t_p-\tilde{t}_k)) \\ 
&= \textbf{e}(-\tilde{N}(t_p-\tilde{t}_k))\frac{1-\textbf{e}(N(t_p-\tilde{t}_k))}{1-\textbf{e}(t_p-\tilde{t}_k)}
\end{align*}
and we see that $\mathcal{S}_{kp} = 0$ iff $N(t_p-\tilde{t}_k)\in\mathbb{Z}/\{0\}$ and $t_p-\tilde{t}_k\notin\mathbb{Z}$. However, notice that
\[
N(t_p-\tilde{t}_k) = N\left(\frac{p-1}{N}-\frac{k-1}{m}-\Delta_k\right) = p-1-\frac{N(k-1)}{m}-N\Delta_k,
\]
so that $N(t_p-\tilde{t}_k)\in\mathbb{Z}/\{0\}$ iff $\frac{N(k-1)}{m}+N\Delta_k = N\tilde{t}_k + \frac{N}{2} \in\mathbb{Z}/\{p-1\}$. This condition equivalently requires $\tilde{t}_k = \frac{j}{N} - \frac{1}{2}$ for some $j\in\mathbb{Z}/\{p-1\}$. Since this must hold for all $p\in [N]$, we have finally have that 
\[
N(t_p-\tilde{t}_k)\in\mathbb{Z}/\{0\} \ \ \ \mbox{iff} \ \ \ \tilde{t}_k = \frac{j}{N} - \frac{1}{2} \ \ \mbox{for some} \ \ j\in\mathbb{Z}/\{0,1, \cdots, N-1\}.
\]

We see that such a condition would imply that $\tilde{t}_k \notin \Omega := [-\frac{1}{2},\frac{1}{2})$, which violates our assumption $\tilde{\tau}\subset\Omega$. This finishes the proof.

\end{proof}

We end this section with the proof of Corollary \ref{fullsignalerror}.

\begin{proof}[Proof of Corollary \ref{fullsignalerror}:]
The proof will consist of applying Theorem \ref{CSsimple} (under identical assumptions) and Theorem \ref{thmS}.

By Theorem \ref{CSsimple}, we have that
\[
\|f-\Psi g^{\sharp}\|_{2} \leq \frac{8\epsilon_s(g)}{\sqrt{s}} + \left(\frac{4}{\lambda\sqrt{s}}+\frac{8\sqrt{2}}{\sqrt{1-\theta}}\right)\left(\frac{\sqrt{N}}{\sqrt{m}}\|d\|_2 + 2\sqrt{N}\sum_{\lvert\ell\rvert>\frac{N-1}{2}}\lvert c_{\ell}\rvert\right)
\]
with probability exceeding $1-\frac{1}{N}$. As in the proof of Theorem \ref{thmS}, we can show that for $x\in\Omega$
\[
\textbf{f}(x) - \langle \textbf{h}(x),\mathcal{F}^*f\rangle = \sum_{\lvert\ell\rvert>\tilde{N}}c_{\ell}\left(\textbf{e}(\ell x) - (-1)^{\lfloor\frac{\ell+\tilde{N}}{N}\rfloor}\textbf{e}(r(\ell)x)\right).
\]
Therefore
\begin{align*}
&\lvert\textbf{f}(x)-\textbf{f}^{\sharp}(x)\rvert := \lvert\textbf{f}(x)-\langle \textbf{h}(x),\mathcal{F}^*\Psi g^{\sharp}\rangle\rvert \\ 
&\leq \lvert\textbf{f}(x)-\langle \textbf{h}(x),\mathcal{F}^*f\rangle\rvert + \lvert\langle \textbf{h}(x),\mathcal{F}^*f\rangle-\langle \textbf{h}(x),\mathcal{F}^*\Psi g^{\sharp}\rangle\rvert \\
&\leq \Biggl\lvert\sum_{\lvert\ell\rvert>\tilde{N}}c_{\ell}\left(\textbf{e}(\ell x) - (-1)^{\lfloor\frac{\ell+\tilde{N}}{N}\rfloor}\textbf{e}(r(\ell)x)\right)\Biggr\rvert + \|\textbf{h}(x)\|_2\|\mathcal{F}^*(f-\Psi g^{\sharp})\|_2\\
&\leq 2\sum_{\lvert \ell\rvert>\tilde{N}}\lvert c_{\ell}\rvert + \frac{8\epsilon_s(x)}{\sqrt{s}} + \left(\frac{4}{\lambda\sqrt{s}}+\frac{8\sqrt{2}}{\sqrt{1-\theta}}\right)\left(\frac{\sqrt{N}}{\sqrt{m}}\|d\|_2 + 2\sqrt{N}\sum_{\lvert\ell\rvert>\frac{N-1}{2}}\lvert c_{\ell}\rvert\right)
\end{align*}
The last inequality holds since $\|\textbf{h}(x)\|_2 = 1$ (here $x$ is considered fixed and $\textbf{h}(x)\in\mathbb{C}^{N}$). This finishes the proof.

\end{proof}

\end{document}